\DeclareMathOperator{\Tr}{Tr}
\newtheorem{theorem}{Theorem}
\newtheorem{remark}{Remark}
\begin{document}
\title{Fundamental Limits for Near-Field Sensing --- Part II: Wide-Band Systems}

\author{Tong Wei,~\IEEEmembership{Member,~IEEE}, Kumar Vijay Mishra, \IEEEmembership{Senior Member,~IEEE}, 
Bhavani Shankar M.R., \IEEEmembership{Senior Member,~IEEE}, Björn Ottersten,~\IEEEmembership{Fellow,~IEEE}


\thanks{The authors are with the Interdisciplinary Centre for Security, Reliability and Trust (SnT), University of Luxembourg, Luxembourg City L-1855, Luxembourg. E-mail: \{tong.wei@, kumar-mishra@ext.,bhavani.shankar@, bjorn.ottersten@\}uni.lu.}
}

\maketitle 

\begin{abstract}
Near-field sensing with extremely large-scale antenna arrays (ELAAs) in practical 6G systems is expected to operate over broad bandwidths, where delay, Doppler, and spatial effects become tightly coupled across frequency. The purpose of this and the companion paper (Part I) is to develop the unified Cramér–Rao bounds (CRBs) for sensing systems spanning from far-field to near-field, and narrow-band to wide-band. This paper (Part II) derives fundamental estimation limits for a wide-band near-field sensing systems employing orthogonal frequency-division multiplexing signaling over a coherent processing interval. We establish an exact near-field wide-band signal model that captures frequency-dependent propagation, spherical-wave geometry, and the intrinsic coupling between target location and motion parameters across subcarriers and slow time. Similar as Part I using the Slepian–Bangs formulation, we derive the wide-band Fisher information matrix and the CRBs for joint estimation of target position, velocity, and radar cross-section, and we show how wide-band information aggregates across orthogonal subcarriers. We further develop tractable far-field and near-field approximations which provide design-level insights into the roles of bandwidth, coherent integration length, and array aperture, and clarify when wide-band effects. Simulation results validate the derived CRBs and its approximations, demonstrating close agreement with the analytical scaling laws across representative ranges, bandwidths, and array configurations.
\end{abstract}

\begin{IEEEkeywords}
Cramér–Rao bound, extremely large-scale antenna array,  wide-band near-field sensing, OFDM. 
\end{IEEEkeywords}

\IEEEpeerreviewmaketitle

\section{Introduction}
Wide-band signaling is essential for high-resolution sensing and imaging because increased bandwidth improves time-delay discrimination and enables finer separation of multipath components \cite{weiss1984fundamental,yu2019wideband}. However, when combined with near-field propagation, wide-band operation introduces beam-split effect in both spatial and frequency domains \cite{cui2024near,elbir2024curse,vakalis2020fourier}. 
In particular, the spherical-wave phase and amplitude vary across both antenna elements and frequency, yielding frequency-dependent array responses and strengthening the coupling among delay, Doppler, and spatial parameters. Consequently, performance limits derived from frequency-invariant path-loss models may substantially mischaracterize achievable accuracy in wide-band near-field sensing \cite{wei2025crb}.

Near-field electromagnetic phenomena were first systematically studied in the context of antenna measurements and near-zone/far-zone characterization \cite{yaghjian1986overview}, and have since been widely explored in wireless communications with large apertures \cite{liu2023near-field} and in radar signal processing \cite{chen2002maximum}. In parallel, orthogonal frequency-division multiplexing (OFDM) is particularly relevant because it is a baseline waveform in modern cellular systems and a common choice for ISAC designs \cite{wan2024OFDM}. Its subcarrier structure naturally supports information accumulation over frequency, while its implementation imposes constraints, such as cyclic-prefix protection, subcarrier orthogonality, and the possibility of residual inter-symbol interference (ISI) or inter-carrier interference (ICI) under delay spread or mobility \cite{keskin2021limited}.

In the companion paper (Part I) \cite{wei2025PartI}, we investigated near-field sensing under narrow-band signaling and established fundamental Cramér–Rao bounds (CRBs) \cite{esmaeilbeig2022cramer,dokhanchi2019mmWave,lv2022co} that quantify how array aperture, coherent integration time, and spherical-wave curvature govern estimation accuracy as the operating regime transitions from far-field to near-field. In some practical 6G deployments, however, extremely large-scale antenna arrays (ELAAs) and synthetic-aperture operation are also expected to exploit large instantaneous bandwidths \cite{wei2023multi-IRS,wei2024RIS,hodge2023index}. In this regime, delay, Doppler, and spatial signatures are no longer cleanly separable; instead, they interact across frequency and slow time, fundamentally changing both the signal model and the resulting information structure \cite{wang2025perofrmance,wei2025PartI}. However, the fundamental limits for near-field sensing in wide-band system are still not well investigated.

Motivated by these considerations, Part II develops a CRB characterization tailored to wide-band near-field sensing with orthogonal frequency-division multiplexing (OFDM) signaling. Specifically, we establish an exact wide-band near-field observation model that captures frequency-dependent propagation and the intrinsic coupling among location, velocity, and reflectivity parameters across subcarriers and slow time. Building on the Slepian–Bangs formulation, we derive the corresponding wide-band Fisher information matrix (FIM) and CRBs, and then provide tractable far-field and near-field approximations that expose the roles of bandwidth, coherent processing interval, and aperture size.

\begin{table*}
\centering
\caption{Comparison with the state-of-the-art}
\vspace{-0.1cm}
\label{tbl:priorcomp}
\scriptsize
\begin{threeparttable}
\begin{tabular}{l|c|c|c|c|c|c|l}
\hline cf.                 & Target region & Bandwidth & Target Model & Target Number & CRB\tnote{a} ~Analysis &  Application    \\
\hline 
\hline 
\cite{wang2017coarrays}    &Far-field  & Narrow-band & Static &Single & DOA\tnote{b} ~estimation  & Radar   \\
\hline 
\cite{liang2021CramerRao}  &Far-field  & wide-band   & Static &Single & DOA estimation  & Radar  \\
\hline 
\cite{hu2024joint}         &Far-field & wide-band    & Moving &Single & Range-velocity-azimuth estimation  & ISAC\tnote{c} \\
\hline
This paper (Part I)       &Near-field & Narrow-band   & Moving &Multiple & RCS\tnote{d}, position ~and Doppler estimation  & ISAC \\
\hline
This paper (Part II)      &Near-field & wide-band   & Moving &Multiple & RCS\tnote{d}, position ~and Doppler estimation  & ISAC \\
\hline
\end{tabular}
    \begin{tablenotes}[para]
	\item[a] CRB: Cram$\acute{\mathrm e}$r-Rao bound.
	\item[b] DOA: direction of arrival.
	\item[c] ISAC: integrated sensing and communication.
	\item[d] RCS: radar cross-section.
	\end{tablenotes}
    \end{threeparttable}
    \normalsize
    \vspace{-1em}    
\end{table*}
\vspace{-0.2cm}
\subsection{Related Works}
\vspace{-0.1cm}

As mentioned in Part I, estimation-theoretic bounds, most notably the CRB, provide a principled benchmark for array-based sensing and are widely used to evaluate and guide system design. For example, in sparse-array processing, CRB analysis has been used to clarify the statistical efficiency of coarray-based DOA estimation and to reveal how virtual aperture and sensor geometry translate into fundamental accuracy limits \cite{wang2017coarrays}. Further, for broadband DOA estimation, CRBs have been derived under sub-band and frequency-decomposition models, highlighting how information aggregates across frequency and how wide-band modeling assumptions impact achievable performance \cite{liang2021CramerRao}. In OFDM-based ISAC system, CRB formulations for joint range–velocity–azimuth estimation provide a representative wide-band framework in which sensing information accumulates across subcarriers under OFDM orthogonality assumptions \cite{hu2024joint}; 
See the detailed comparison in Table~\ref{tbl:priorcomp}. 

Despite these advances, several key limitations remain when one moves to wide-band near-field sensing with synthetic-aperture and extremely large MIMO arrays: \textit{(1) Element-wise position–Doppler coupling:} Many analyses either assume static targets or adopt far-field Doppler models, overlooking the element-dependent Doppler in true near-field regimes and the resulting coupling among range, angle, and velocity; \textit{(2) Frequency-dependent propagation:} wide-band CRB studies often assume frequency-invariant path loss, which can be inaccurate in near-field wide-band operation where wavelength–distance ratios and array radiation profiles vary across subcarriers. \textit{(3) Joint multi-parameter, multi-target bounds:} Existing results frequently consider only subsets of parameters and/or a single target, rather than jointly treating position, 2D velocity, and RCS in a unified Fisher-information framework. \textit{(4) Unified regime comparison:} To the best of our knowledge, there is no systematic near-field CRB framework spanning narrow-band to wide-band and explicitly revealing the transition laws with bandwidth, aperture, and observation time. 
To fill these gaps, this two-part paper develops a unified CRB framework for near-field sensing, and the detailed contributions for Part II will be discussed in the sequel.

\vspace{-0.2cm}
\subsection{Contributions}
\vspace{-0.1cm}
In this paper (\textbf{Part II}), we establish fundamental limits for near-field sensing under wide-band OFDM signaling and provides closed-form insights suitable for system design. The contributions of this paper over prior-art are listed below:  \\
\begin{enumerate}
   \item  \textbf{Unified Near-field wide-band Sensing Model:} Prior studies on near-field sensing focus either on narrow-band or static target. In this paper (\textbf{Part II}), we propose a unified wide-band near-field ISAC model with ELAA and multiple moving targets.  Hence, the previous works \cite{liang2021CramerRao,hu2024joint} can be considered as a special case of our proposed model. also, see Table~\ref{tbl:priorcomp}.
   \item  \textbf{Unified CRB Derivation:} We derive the general expression of the Fisher Information Matrix (FIM) for multiple-target sensing, explicitly incorporating frequency-dependent propagation, spherical wavefronts, and spatial–temporal coupling across subcarriers and OFDM symbols. 
   \item  \textbf{Asymptotic Closed-form Expressions:} To gain analytical insight, we derive approximate CRB expressions under a single-target scenario for two canonical approximation regimes, i.e., wide-band far-field and wide-band near-field, demonstrating their asymptotic relationships and transitions. The asymptotic close-form expressions reveal how aperture size, signal bandwidth, and target range jointly determine the estimation accuracy and resolution limit. These results provide theoretical guidance for the design of synthetic aperture antennas which operate in near-field conditions. 
   \item  \textbf{Extensive Performance Evaluation:} Extensive numerical simulations validate the theoretical CRBs and illustrate the evolution of performance from classical far-field models to full wide-band near-field operation, serving as a benchmark for future algorithm design and measurement verification.
\end{enumerate}


\vspace{-0.2cm}
\subsection{Organization}
\vspace{-0.1cm}
The remainder of Part~II is organized as follows. Section~II introduces the system geometry and the wide-band near-field signal model. Section~III derives the Fisher information matrix, presents conditional wide-band CRBs for RCS, 2D velocity, and  location, and then provides asymptotic CRB expressions under near-field and far-field approximations and discusses the resulting scaling laws and design insights. Section~IV reports numerical experiments to validate the analysis and quantify the accuracy of the proposed approximations. Finally, Section~V concludes Part~II and briefly discusses its connections to, and differences from, Part~I.

\section{Wide-band Signal Model}
Consider a wide-band near-field sensing system employing orthogonal frequency-division multiplexing (OFDM). The transmitter (Tx) and receiver (Rx) are equipped with uniform linear arrays (ULAs) comprising $N_t$ and $N_r$ antenna elements, respectively, which may be co-located or spatially separated in a two-dimensional (2D) Cartesian plane. Let $d_t$ and $d_r$ denote the inter-element spacing of the Tx and Rx arrays. The coordinates of the $n_t$-th Tx and $n_r$-th Rx elements are denoted by ${\mathbb I}_{n_t}=(x_{n_t},y_{n_t})$ and ${\mathbb I}_{n_r}=(x_{n_r},y_{n_r})$, respectively, where $n_t\in\{1,\ldots,N_t\}$ and $n_r\in\{1,\ldots,N_r\}$. The radiative near-field coverage region, i.e., Fresnel zone, \cite{liu2023near-field} contains $Q$ sensing targets, indexed by $q\in\{1,\ldots,Q\}$ \cite{hu2024joint,hu2025high}. We then denote the location and velocity of the $q$-th sensing target by ${\mathbb I}_{q}=(x_{q},y_{q})$ and ${\bm v}^{(q)}=[v_x^{(q)},v_y^{(q)}]$, respectively \cite{sun2024widely}\footnote{In ISAC systems, scatterers are typically assumed to be static compared to the moving communication users and radar targets\cite{elbir2024terahertz,mishra2019toward}.}.

\subsection{Wide-band Transmit Signal Model}

We consider the Tx employing the OFDM waveform with $K$ subcarriers and $M$ OFDM symbols. Each OFDM symbol has a total duration $T_{\mathrm{sym}} = T_{\mathrm{cp}} + T_{\mathrm{es}}$, where $T_{\mathrm{cp}}$ and $T_{\mathrm{es}}$ denote the cyclic-prefix duration and the useful symbol duration, respectively. The subcarrier spacing is $\Delta f = 1/T_{\mathrm{es}}$. Hence, $T_{\mathrm{CPI}}=MT_{\mathrm{sym}}$ denotes the length of the coherent processing interval (CPI) \cite{liu2020super}.
Meanwhile, we denote the transmit symbol vector of the $k$-th subcarrier at $m$-th symbol slot by  ${\bf x}_{k,m}=[x_{k,m}(1),\cdots,x_{k,m}(N_t)]^T\in\mathbb{C}^{N_t\times 1}$, where $\mathbb{E}\{{\bf x}_{k,m}{\bf x}_{k,m}^H\}=\mathcal{P}_k{\bf I}_{N_t}$ denotes the orthogonal transmission and $\mathcal{P}_k$ is the average transmit power allocated to the $k$-th subcarrier \cite{keskin2021MIMO,yang2024performance}. 

Applying $N_t$ parallel $K$-point inverse fast Fourier transforms (IFFTs) to the frequency-domain symbols yields the complex baseband transmit signal for the $m$-th OFDM symbol \cite{keskin2021limited} 
\begin{align}\label{transmit_BB}
  {\bf x}_m(t)  = \sum_{k=1}^{K} {\bf x}_{k,m}e^{j2\pi{k}\triangle{f}t}\mathrm{rect}(\frac{t-mT_{\mathrm{sym}}}{T_{\mathrm{sym}}})
\end{align}
where $\mathrm{rect}(\cdot)$ is the rectangular pulse function which takes the value 1 for $t\in[0,1]$ and 0 otherwise. 
The baseband signal is then upconverted  over the block of $M$ symbols
for $t\in[0,MT_{\mathrm{sym}}]$ as \cite{zheng2017super}
\begin{align}\label{transmit_UC}
  \widetilde{\bf x}(t)  
  &=\Re\{\sum_{m=1}^{M}\sum_{k=1}^{K} {\bf x}_{k,m}e^{j2\pi{f_k}t}\mathrm{rect}(\frac{t-mT_{\mathrm{sym}}}{T_{\mathrm{sym}}})\},
\end{align}
where $f_c$ denotes the carrier frequency and $f_k=f_c+{k}\triangle{f}$ is the central frequency of the $k$-th subcarrier. 

To simplify the signal model and enable tractable analysis of the near-field wide-band sensing system, we impose the following assumptions:
\begin{description}
\item[A1] ``Bandwidth-invariant RCS'': The complex scattering coefficient of each target is assumed constant over the CPI. In particular, we assume that $\alpha_q$, $q=1,\ldots,Q$, depends only on the physical target, user, or scatterer and is invariant across all $K$ subcarriers and $M$ OFDM symbols \cite{bao2019precoding,baquero2019full}.
\item[A2] ``Constant noise power'': The noise (including thermal noise, hardware impairments, and residual self-interference) is modeled as zero-mean, spatially and temporally white, with constant power across all $K$ subcarriers and $M$ OFDM symbols \cite{cheng2021hybrid}. Furthermore, the noise is assumed uncorrelated across frequency.  
\item[A3] ``Negligible higher-order reflections'': Signal components undergoing more than one reflection are assumed to suffer severe path loss and are modeled as part of the effective noise. In particular, multi-bounce paths such as Tx–scatterer–target–Rx are neglected in the received signal model \cite{xu2023bandwidth}. 
\end{description}

\vspace{-0.2cm}
\subsection{Wide-band Near-Field Receive Signal Model}
\vspace{-0.1cm}
After the propagation delay and I/Q demodulator, the received signal at the $n_r$-th Rx antenna is
\begin{small}
\begin{align}\label{Radar_Rx}
   y_{n_r}(t) &\!=\! \sum_{q=1}^{Q}{\alpha}_q g_{n_r}^q(t)
    \sum_{n_t=1}^{N_t}g_{n_t}^q(t) \widetilde{x}_{n_t}(t\!-\!\tau_{n_tn_r}^{q})\cdot e^{j2\pi{f_{n_tn_r}^{q,k}}t}\!+\!n_{n_r}(t) 
\end{align}
\end{small}
\noindent\hspace{-1em} where ${\alpha}_q$ denotes the RCS of $q$-th target, $n_{n_r}(t)$ denotes the additive Gaussian white noise (AWGN) of the $n_r$-th Rx element in time domain, and the Doppler shift for the $q$-th target on the $k$-th subcarrier with respect to the $n_t$-th Tx and $n_r$-th Rx antenna pair is given \cite{sun2024widely}
\begin{align} \label{Doppler1}
   f_{n_tn_r}^{q,k} &= \frac{f_k}{c}(\frac{\langle{\bm v}^{(q)}, {\mathbb I}_{q}-{\mathbb I}_{n_t}\rangle}{\|{\mathbb I}_{q}-{\mathbb I}_{n_t}\|}+\frac{\langle{\bm v}^{(q)}, {\mathbb I}_{q}-{\mathbb I}_{n_r}\rangle}{\|{\mathbb I}_{q}-{\mathbb I}_{n_r}\|}), 
\end{align}
where ${\bm v}^{(q)}=[v_x^{(q)},v_y^{(q)}]$ denotes the 2D velocity vector of the $q$-th target, with $v_x^{(q)}$ and $v_y^{(q)}$ its velocity components along the $x$- and $y$-axes, respectively \cite{sun2019target}. The corresponding propagation delay in \eqref{Radar_Rx} is
\begin{align}\label{delay}
\tau_{n_tn_r}^{q}= \frac{\|{\mathbb I}_{q}-{\mathbb I}_{n_t}\|+\|{\mathbb I}_{q}-{\mathbb I}_{n_r}\|}{c},
\end{align}
and the path loss for each antenna in time domain is  
\begin{align}\label{path_loss_time}
g_{n_r}^q(t) = \sum_{k=1}^{K}g_{n_r,k}^qe^{j2\pi{f_k}t},~
g_{n_t}^q(t) = \sum_{k=1}^{K}g_{n_t,k}^qe^{j2\pi{f_k}t},
\end{align}
where $g_{n_r,k}$ and $g_{n_t,k}$ denote the path loss for $n_r$-th Rx element and $n_t$-th Tx element, respectively, on $k$-th subcarrier. 
Most existing works model the path loss in wide-band systems as frequency-independent  \cite{xiao2023novel,liu2022joint,li2021lntelligent}, i.e., $g_{n_r,1}=\cdots=g_{n_r,K}=c_0({d_0}/{\|{\mathbb I}_{q}-{\mathbb I}_{n_r}\|})^{\epsilon}$, where $c_0$ denotes the path loss at
the reference distance $d_0$ and $\epsilon$ is the path loss exponent, typically obtained from field measurements \cite{ding2016}.
However, in the near-field regime, this assumption becomes inaccurate because the wavelength-to-distance ratio varies significantly across subcarriers and antenna elements. To capture this effect, we adopt a frequency-dependent path-loss model in \eqref{Radar_Rx}, given by \cite{dovelos2021channel,xu2024near}   
\begin{flalign} \label{path_loss}
g_{n_r,k}^q = \frac{\lambda_k}{4\pi{\|{\mathbb I}_{q}-{\mathbb I}_{n_r}\|}} ,~
g_{n_t,k}^q = \frac{\lambda_k}{4\pi{\|{\mathbb I}_{q}-{\mathbb I}_{n_t}\|}}.
\end{flalign}

\begin{remark}  
Most existing works simplify the Doppler shift model based on the following approximations
\begin{align} \label{Doppler3}
   f_{n_tn_r}^{q,k} &\approx \frac{2f_k}{c}(\frac{\langle{\bm v}^{(q)}, {\mathbb I}_{q}-{\mathbb I}_{o}\rangle}{\|{\mathbb I}_{q}-{\mathbb I}_{o}\|}) \approx \frac{2f_k{v}^{(q)}}{c}, 
\end{align}
where ${v}^{(q)}=\|{\bm v}^{(q)}\|$ denotes the radial velocity of $q$-th target, in which the location and Doppler is decoupled in far-field region; See\cite{xiao2023novel,sun2019target,xu2023bandwidth}.
The first approximation in \eqref{Doppler3} assumes that all Tx/Rx elements are very close to a common reference point, while the second uses $\frac{\langle{\bm v}^{(q)}, {\mathbb I}_{q}-{\mathbb I}_{o}\rangle}{\|{\mathbb I}_{q}-{\mathbb I}_{o}\|}= \cos{\theta}\|{\bm v}^{(q)}\|$, 
where $\theta$ denotes the angle between the target velocity and the line-of-sight from the reference point to the target. Hence, the second approximation holds when the angle between the velocity vector and the line-of-sight direction is small. These assumptions are often reasonable in far-field scenarios, but they become inaccurate in the near-field region. Therefore, in this paper, we adopt the exact Doppler model in \eqref{Doppler1}, in which the target position and velocity are intrinsically coupled.
\end{remark}

\begin{figure*}[!t]
{
\begin{align}
   \tilde{y}_{n_r}(t) &=  \sum_{q=1}^{Q}{\alpha}_q g_{n_r}^q(t)
    \sum_{n_t=1}^{N_t}g_{n_t}^q(t) 
  \sum_{m=1}^{M}\sum_{k=1}^{K} {x}_{k,m}(n_t)e^{j2\pi{f_k}(t-\tau_{n_tn_r}^{q})}e^{j2\pi{f_{n_tn_r}^{q,k}}t}\mathrm{rect}(\frac{t-mT_{\mathrm{sym}}}{T_{\mathrm{sym}}})+n_{n_r}(t), \label{Radar_Rx_R1}\\
  y_{n_r}(t)  &=  \sum_{q=1}^{Q}{\alpha}_q g_{n_r}^q(t)
    \sum_{n_t=1}^{N_t}g_{n_t}^q(t) 
  \sum_{m=1}^{M}\sum_{k=1}^{K} {x}_{k,m}(n_t)e^{j2\pi{f_k}(t-\tau_{n_tn_r}^{q})}e^{j2\pi{f_{n_tn_r}^{q,k}}mT_{\mathrm{sym}}}\mathrm{rect}(\frac{t-mT_{\mathrm{sym}}}{T_{\mathrm{sym}}})+n_{n_r}(t), \label{Radar_Rx_R2} \\
     y_{n_r}[m,k] 
   & = \int_{mT_{\mathrm{sym}}+T_{\mathrm{cp}}}^{(m+1)T_{\mathrm{sym}}}y_{n_r}(t)e^{-2j\pi f_kt}dt =\int_{mT_{\mathrm{sym}}+T_{\mathrm{cp}}}^{(m+1)T_{\mathrm{sym}}}y_{n_r}(t)e^{-2j\pi f_ct}e^{-2j\pi k\triangle ft}dt \nonumber\\
 &= \sum_{q=1}^{Q}{\alpha}_q g_{n_r,k}^q e^{-j2\pi{f_k}(\frac{\|{\mathbb I}_{q}-{\mathbb I}_{n_r}\|}{c})}e^{j2\pi\frac{f_k}{c}(\frac{\langle{\bm v}^{(q)}, {\mathbb I}_{q}-{\mathbb I}_{n_r}\rangle}{\|{\mathbb I}_{q}-{\mathbb I}_{n_r}\|})mT_{\mathrm{sym}}} {\bf a}_{\mathrm{Tx}}^T(f_k,{\mathbb I}_{q},{\bm v}^{(q)}){\bf x}_{k,m}+n_{n_r}[m,k], \label{Radar_Rx_R3}
\end{align}}
\hrulefill
\end{figure*}

Substituting \eqref{transmit_UC} into \eqref{Radar_Rx}, the receive signal of $n_r$-th Rx antenna in time domain is given by \eqref{Radar_Rx_R1} which is arranged at the top of next page.
Following \cite{zheng2017super,keskin2021limited}, we then assume that the Doppler-related phase-shift within an OFDM symbol duration is negligible, i.e., $f_{n_tn_r}^{q,k}\ll\frac{1}{T_{\mathrm{sym}}}<\frac{1}{T_{\mathrm{es}}}=\triangle f$, hence $f_{n_tn_r}^{q,k}t\approx f_{n_tn_r}^{q,k}mT_{\mathrm{sym}}$, $t\in[mT_{\mathrm{sym}},(m+1)T_{\mathrm{sym}}]$\footnote{This is because that for the function $f_{n_tn_r}^{q,k}(t)$, the term $f_{n_tn_r}^{q,k}=\tan{\psi}$ denotes the slope of the function. If time difference is small enough, the maximum value difference is also negligible,  $f_{n_tn_r}^{q,k}T_{\mathrm{sym}}\ll1$ in each OFDM symbol duration.}. Then, we can rewrite \eqref{Radar_Rx_R1} into \eqref{Radar_Rx_R2} which is also arranged at the top of next page. 
Removing the carrier frequency and CP and implementing the Fourier transform within $m$-th time interval i.e., OFDM symbol duration, the received signal of $m$-th OFDM symbol on the $k$-th subcarrier is given by \eqref{Radar_Rx_R3}, see the top of next page \cite{keskin2021limited},
where $n_{n_r}[m,k]=\int_{mT_{\mathrm{sym}}+T_{\mathrm{cp}}}^{(m+1)T_{\mathrm{sym}}}n_{n_r}(t)e^{-2\pi f_ct}e^{-2\pi k\triangle ft}dt$ denotes the radar noise at the $m$-th ODFM symbol on the $k$-th subcarrier which follows zero mean and variance matrix $\sigma_r^2$. 
Stacking the received signals of all Rx elements, we can rewrite \eqref{Radar_Rx_R3} as
\begin{align}\label{Radar_Rx_R4}
  {\bf y}_{k,m} \!=\! \sum_{q=1}^{Q} {\alpha}_q {\bf a}_{\mathrm{Rx}}(k,m,q){\bf a}_{\mathrm{Tx}}^T(k,m,q){\bf x}_{k,m}\!+\!{\bf n}_{k,m},
\end{align}
where
\begin{flalign*}
  {\bf a}_{\mathrm{Tx}}(k,m,q) =& [g_{1,k}^qe^{j2\pi\varphi_1(k,m,q)},\cdots,g_{N_t,k}^qe^{j2\pi\varphi_{N_t}(k,m,q)}]^T, \\  {\bf a}_{\mathrm{Rx}}(k,m,q)=& [g_{1,k}^qe^{j2\pi\varphi_1(k,m,q)},\cdots,g_{N_r,k}^qe^{j2\pi\varphi_{N_r}(k,m,q)}]^T,
\end{flalign*}
denote the wide-band near-field steering vector of Tx and Rx, respectively, in which 
\begin{align}
&\varphi_{n_r}(k,m,q)\!=\!{\frac{f_k}{c}((\frac{\langle{\bm v}^{(q)}, {\mathbb I}_{q}\!-\!{\mathbb I}_{n_r}\rangle}{\|{\mathbb I}_{q}-{\mathbb I}_{n_r}\|})mT_{\mathrm{sym}}\!-\!\|{\mathbb I}_{q}\!-\!{\mathbb I}_{n_r}\|)}  \nonumber
\end{align}
denotes the total phase shift caused by the location and Doppler of $q$-th target for $n_r$-th elements at $m$-th OFDM symbol and $k$-th subcarrier. It is observed from \eqref{Radar_Rx_R4} that in the near-field region, Doppler shift is related to both the OFDM symbol and subcarrier index. Note that the far-field steering vector can be considered as a special case  when the distance between antenna and target is sufficiently large \cite{xu2024near}. Thus, the proposed method can be also utilized in the far-field sensing system. 

To further simplify the notation, the receive signal in \eqref{Radar_Rx_R4} is rewritten  as 
\begin{flalign}\label{Radar_Rx_R6}
  {\bf y}_{k,m} = & {\bf A}_{\mathrm{Rx},k,m}{\bf B} {\bf A}_{\mathrm{Tx},k,m}^T {\bf x}_{k,m}\!+\!{\bf n}_{k,m} \nonumber\\
  = &{\bf A}_{k,m}{\bf x}_{k,m}\!+\!{\bf n}_{k,m},
\end{flalign}
where ${\bf B}=\mathrm{diag}[\alpha_1,\cdots,\alpha_{Q}]$, ${\bf A}_{\mathrm{Tx},k,m}=[{\bf a}_{\mathrm{Tx}}(k,m,1),\cdots,{\bf a}_{\mathrm{Tx}}(k,m,Q)]$ and ${\bf A}_{\mathrm{Rx},k,m}=[{\bf a}_{\mathrm{Rx}}(k,m,1),\cdots,{\bf a}_{\mathrm{Rx}}(k,m,Q)]$.
We can further stack the received signal in \eqref{Radar_Rx_R6} from all $M$ OFDM symbols into a single vector as 
\begin{align}\label{Radar_Rx_R7}
  {\bf y}_{k} = {\bf A}_{k}
     {\bf x}_{k}+{\bf n}_{k}
\end{align}
where ${\bf y}_{k}=[{\bf y}_{k,1}^T,\cdots,{\bf y}_{k,M}^T]^T$ denotes the receive signal of $k$-th subcarrier, ${\bf A}_{k}=\mathrm{bdiag}[{\bf A}_{k,1},\cdots,{\bf A}_{k,M}]$, ${\bf x}_{k}=[{\bf x}_{k,1}^T,\cdots,{\bf x}_{k,M}^T]^T$ and ${\bf n}_{k}=[{\bf n}_{k,1}^T,\cdots,{\bf n}_{k,M}^T]^T$. 
Then, we collect all unknown real-valued parameters in
\begin{align}\label{parameter_real}
{\bm\theta} &=[{\bm x},{\bm y},{\bm v}_x,{\bm v}_y,\bm\alpha_{R},\bm\alpha_{I}]^T\in \mathbb{R}^{6Q\times1},
\end{align}
where ${\bm x}=[x_1,\cdots, x_{Q}]$, ${\bm y}=[y_1,\cdots, y_{Q}]$, ${\bm v}_x=[v_x^{(1)},\cdots,v_x^{(Q)}]$, ${\bm v}_y=[v_y^{(1)},\cdots,v_y^{(Q)}]$, $\bm\alpha_{R}=[\alpha_{R_1},\cdots,\alpha_{R_Q}]$, and $\bm\alpha_{I}=[\alpha_{I_1},\cdots,\alpha_{I_Q}]$.  Thus, $\bm\theta$ jointly parameterizes the $x$- and $y$-coordinates, the $x$- and $y$-components of velocity, and the real and imaginary parts of the complex RCS coefficients for all $Q$ targets.

\begin{remark}[Wide-band implication and ISI condition]
In the wide-band OFDM case, targets imprint frequency-dependent delay signatures across subcarriers through the factor $e^{-j2\pi f_k \tau}$, so bandwidth provides an additional information dimension beyond spatial aperture and slow-time Doppler. This advantage relies on the OFDM orthogonality assumption that the cyclic prefix (CP) exceeds the effective delay spread. Specifically, to avoid inter-symbol interference (ISI), it is typically required that
\begin{align}
\max_{q,n_t,n_r}\ \tau_{n_tn_r}^{(q)} - \min_{q,n_t,n_r}\ \tau_{n_tn_r}^{(q)} \ \le\ T_{\mathrm{cp}},
\end{align}
(or, more generally, that the overall channel delay spread observed at the receiver is within $T_{\mathrm{cp}}$). If this condition is violated---which can occur in near-field wide-aperture deployments or rich multipath environments---residual ISI may arise and a time-domain model/equalization (or a longer CP) becomes necessary. In this paper, we assume the CP is sufficiently long such that ISI is negligible, enabling the per-subcarrier discrete model used for CRB analysis.
\end{remark}

\section{Wide-band CRB Analysis}
To evaluate the parameter estimation performance, CRB has been widely utilized as a estimation lower-bound \cite{stoica1990performance,liang2021CramerRao}. 
In this section, we derive the wide-band near-field CRBs for joint estimation of position, velocity, and RCS of moving targets. 
Let us first rewrite \eqref{Radar_Rx_R7} as the real vector $\bar{\bf y}_{k}=[\Re{({\bf y}_{k})}^T;\Im{({\bf y}_{k})^T}]^T$, $\bar{\bm\mu}_k=[\Re{({\bm\mu}_k)}^T,\Im{({\bm\mu}_k)}^T]^T$, and $\bar{\bf C}_k=\frac{\sigma_r^2}{2}{\bf I}_{2N_rM}$,
where ${\bm\mu}_k$ and ${\bf C}_k$ denote the mean and covariance of ${\bf y}_k$, respectively, i.e., ${\bf y}_k\sim({\bm\mu}_k,{\bf C}_k)$, and with the AWGN assumption\footnote{Note that \eqref{dist._rx} holds due to the AWGN assumption. If this assumption does not hold, i.e., not circular or white, the additional parameters are required to characterize the matrix ${\bf C}_k$, which beyond the scope of this paper and will leave for future works.}, we have 
\begin{subequations} \label{dist._rx}
\begin{align}
 {\bm\mu}_k &= {\bf A}_k{\bf x}_k, \label{statistic1}\\
 {\bf C}_k  &= \mathbb{E}\{({\bf y}_k-{\bm\mu}_k)({\bf y}_k-{\bm\mu}_k)^H\}=\sigma_r^2{\bf I}_{N_r M}, \label{statistic2}
\end{align}
\end{subequations}
To evaluate CRB, we can write the likelihood function as \cite{stoica2005spectral}
\begin{align}\label{likelihood}
 p_k(\bar{\bf y}_k,{\bm\theta}) = \frac{1}
 {\sqrt{2\pi^{2{N}_rM} |\bar{\bf C}_k|}}
 e^{-(\bar{\bf y}_k-\bar{\bm\mu}_k)^T\bar{\bf C}_k^{-1}(\bar{\bf y}_k-\bar{\bm\mu}_k)/2}. 
\end{align}
According to \eqref{likelihood}, we can write the log-likelihood function  as 
\begin{flalign}\label{likelihood_log}
\ln{p_k} 
 \!=\! -\frac{1}{2}(\ln{|\bar{\bf C}_k|}\!+\!
 (\bar{\bf y}_k\!-\!\bar{\bm\mu}_k)^T\bar{\bf C}_k^{-1}(\bar{\bf y}_k\!-\!\bar{\bm\mu}_k)), 
\end{flalign}
in which the constant term $MN_r\ln{2\pi}$ is omitted for simplicity hereafter. 
Based on the above, the near-field FIM at $k$-th subcarrier is given by \cite{stoica2005spectral}
\begin{align}\label{FIM_gen}
 {\bf F}_k = -\mathbb{E}\left\{ \frac{\partial^2\ln{p_k}}{\partial{\bm\theta}~\partial{\bm\theta}^T}\right\} 
\end{align}
where $\mathbb{E}\{\cdot\}$ denotes the expectation operator. Because OFDM orthogonalizes the subcarriers and we assume no inter-carrier interference, the observations at different subcarriers are independent \cite{zhang2020joint}. Thus, the wide-band near-field FIM is therefore the sum over subcarriers \cite{liang2021CramerRao}
\begin{align}\label{FIM_wide-band}  
   \widetilde{\bf F} = \sum_{k=1}^{K}{\bf F}_k.   
\end{align} 
Based on Slepian-Bangs formula \cite{esmaeilbeig2022cramer}, we can calculate the inverse of generalized CRB on $k$-th subcarrier, see equation (B.3.25) in \cite{stoica2005spectral}, as  
\begin{small}
\begin{flalign}\label{FIM_spe}
&{\bf F}_k(i,j) \!=\!\Tr\left({\bf C}_k^{-1}\frac{\partial{\bf C}_k}{\partial{\theta}_i}{\bf C}_k^{-1}\frac{\partial{\bf C}_k}{\partial{\theta}_j}\right)\!+\!2\Re\left[\left({\frac{\partial{\bm\mu}_k}{\partial{\theta}_i}}\right)^H{\bf C}_k^{-1}\frac{\partial{\bm\mu}_k}{\partial{\theta}_j}\right],& 
\end{flalign}
\end{small}
in which the first term is equal to zero due to $\frac{\partial{\bf C}_k}{\partial{\theta}_k}={\bf 0}, \forall~k$.   
Then we can substitute the exact expression of ${\bf C}_k$ and ${\bm\mu}_k$ into \eqref{FIM_spe}, we have 
\begin{align} \label{derivation3}
   \frac{\partial{\bm\mu}_k}{\partial{\theta}_i} =   
    \left[\begin{matrix}  
       \frac{\partial{\bf A}_{k,1}{\bf x}_{k,1}}{\partial{\theta}_i}    \\
       \vdots   \\
      \frac{\partial{\bf A}_{k,M}{\bf x}_{k,M}}{\partial{\theta}_i} 
    \end{matrix}\right],  ~
   \frac{\partial{\bm\mu}_k}{\partial{\theta}_j} = 
    \left[\begin{matrix}  
       \frac{\partial{\bf A}_{k,1}{\bf x}_{k,1}}{\partial{\theta}_j}    \\
       \vdots   \\
      \frac{\partial{\bf A}_{k,M}{\bf x}_{k,M}}{\partial{\theta}_j} 
    \end{matrix}\right], 
\end{align}
where ${\theta}_i$ and ${\theta}_j$ denotes the $i$-th and $j$-th
element of ${\bm\theta}$.
Notice that once the equation \eqref{statistic2} holds, the first term in \eqref{FIM_spe} will be vanished and ${\bf C}_k^{-1}=\frac{1}{\sigma_r}{\bf I}$. Then, the difficulty of FIM calculation is how to calculate the derivation vectors \eqref{derivation3}. Now, we will present how to calculate the vectors in \eqref{derivation3} in the sequel. 
The full FIM at the $k$-th subcarrier can be partitioned according to the structure of $\bm\theta$ as
\begin{small}
\begin{align}\label{FIM_block}
\hspace*{-0.5em}{\bf F}_k  
\!=\!&\left[\begin{matrix}  
 {\bf F}^k_{{\bf x}{\bf x}} \!&\! {\bf F}^k_{{\bf x}{\bf y}} \!&\! {\bf F}^k_{{\bf x}{\bf v}_x} \!&\! {\bf F}^k_{{\bf x}{\bf v}_y} \!&\! {\bf F}^k_{{\bf x}{\bm\alpha}_R} \!&\! {\bf F}^k_{{\bf x}{\bm\alpha}_I}  \\
 {\bf F}^k_{{\bf y}{\bf x}} \!&\! {\bf F}^k_{{\bf y}{\bf y}} \!&\! {\bf F}^k_{{\bf y}{\bf v}_x} \!&\! {\bf F}^k_{{\bf y}{\bf v}_y} \!&\! {\bf F}^k_{{\bf y}{\bm\alpha}_R} \!&\! {\bf F}^k_{{\bf y}{\bm\alpha}_I}\\
  {\bf F}^k_{{\bf v}_x{\bf x}} \!&\! {\bf F}^k_{{\bf v}_x{\bf y}} \!&\! {\bf F}^k_{{\bf v}_x{\bf v}_x} \!&\! {\bf F}^k_{{\bf v}_x{\bf v}_y} \!&\! {\bf F}^k_{{\bf v}_x{\bm\alpha}_R} \!&\! {\bf F}^k_{{\bf v}_x{\bm\alpha}_I}\\
  {\bf F}^k_{{\bf v}_y{\bf x}} \!&\! {\bf F}^k_{{\bf v}_y{\bf y}} \!&\! {\bf F}^k_{{\bf v}_y{\bf v}_x} \!&\! {\bf F}^k_{{\bf v}_y{\bf v}_y} \!&\! {\bf F}^k_{{\bf v}_y{\bm\alpha}_R} \!&\! {\bf F}^k_{{\bf v}_y{\bm\alpha}_I}\\
  {\bf F}^k_{{\bm\alpha}_R{\bf x}} \!&\! {\bf F}^k_{{\bm\alpha}_R{\bf y}} \!&\! {\bf F}^k_{{\bm\alpha}_R{\bf v}_x} \!&\! {\bf F}^k_{{\bm\alpha}_R{\bf v}_y} \!&\! {\bf F}^k_{{\bm\alpha}_R{\bm\alpha}_R} \!&\! {\bf F}^k_{{\bm\alpha}_R{\bm\alpha}_I}\\
  {\bf F}^k_{{\bm\alpha}_I{\bf x}} \!&\! {\bf F}^k_{{\bm\alpha}_I{\bf y}} \!&\! {\bf F}^k_{{\bm\alpha}_I{\bf v}_x} \!&\! {\bf F}^k_{{\bm\alpha}_I{\bf v}_y} \!&\! {\bf F}^k_{{\bm\alpha}_I{\bm\alpha}_R} \!&\! {\bf F}^k_{{\bm\alpha}_I{\bm\alpha}_I}\\
    \end{matrix}\right],
\end{align}
\end{small}
\noindent\hspace{-1em} where the diagonal blocks correspond to the information on position, velocity, and RCS parameters, and the off-diagonal blocks capture their mutual coupling. For an unbiased estimator, the CRB matrix is given by the inverse FIM, i.e., ${\mathrm {CRB}}=\widetilde{\bf F}^{-1}$, where $\widetilde{\bf F}$ is defined in \eqref{FIM_wide-band}. When focusing on a subset of parameters (e.g., position only), we use Schur-complement techniques \cite{dokhanchi2019mmWave} to extract the corresponding sub-block of $\widetilde{\bf F}^{-1}$, which yields the tightest CRB for that subset. 


\vspace{-0.2cm}
\subsection{Conditional CRB for RCS Estimation}
\vspace{-0.1cm}

To begin with, we have the following equations for the complex-valued partial derivative in terms a real value \cite{hjorungnes2007complex}
\begin{flalign} \label{FIM_Derivation}
  &\frac{\partial(a+jb)e^{jc}}{\partial a} = e^{jc},~
  \frac{\partial(a+jb)e^{jc}}{\partial b} = je^{jc}. 
\end{flalign}
Then, we can calculate the partial derivative related to the real and imaginary parts of RCS as 
\begin{subequations} 
\label{FIM_Derivation_RCS}
\begin{align} 
 & \frac{\partial{\bf A}_{k,m}{\bf x}_{k,m}}{\partial\alpha_{R_{q}}} =  {\bf a}_{\mathrm{R}}(k,m,q){\bf a}^T_{\mathrm{T}}(k,m,q){\bf x}_{k,m}, \label{FIM_Derivation_RCS1}\\
& \frac{\partial{\bf A}_{k,m}{\bf x}_{k,m}}{\partial\alpha_{I_{q}}} = j {\bf a}_{\mathrm{R}}(k,m,q){\bf a}^T_{\mathrm{T}}(k,m,q){\bf x}_{k,m}.\label{FIM_Derivation_RCS2}
\end{align}
\end{subequations}
where $q=1,\cdots,Q$.
According to Cauchy-Riemann equations, we have  \cite{hjrungnes2011complex}
\begin{subequations} \label{FIM_RCS}
\begin{align} 
  \frac{\partial{\bf A}_{k,m}{\bf x}_{k,m}}{\partial{\bm\alpha}_{R}} &= [\breve{\bf A}_{k,m}(1){\bf x}_{k,m},\cdots,\breve{\bf A}_{k,m}(Q){\bf x}_{k,m}],  \label{FIM_RCS_Re}\\
  \frac{\partial{\bf A}_{k,m}{\bf x}_{k,m}}{\partial{\bm\alpha}_{I}} &= j[\breve{\bf A}_{k,m}(1){\bf x}_{k,m},\cdots,\breve{\bf A}_{k,m}(Q){\bf x}_{k,m}],   \label{FIM_RCS_Im}
\end{align}
\end{subequations}
where $\breve{\bf A}_{k,m}(q)={\bf A}_{\mathrm{R},k,m}{\bf \Lambda}_{q} {\bf A}_{\mathrm{T},k,m}^T$ denotes the partial derivative matrix and ${\bf \Lambda}_{q}$ denotes the selection matrix with $q$-th diagonal element is one and the others are zero. Substituting \eqref{FIM_RCS} into \eqref{FIM_wide-band} and \eqref{FIM_spe}, we can get the wide-band CRB for RCS estimation as \cite{liang2021CramerRao} 
\begin{small}
\begin{align} \label{CRB_resp.2.2}
&\mathrm{CRB}_{\alpha_{R_{q}}} \!=\!\mathrm{CRB}_{\alpha_{I_{q}}} \nonumber \\
= &\frac{\sigma_r^2}{2\mathbb{E}\{{\sum}_{k=1}^K{\sum}_{m=1}^M\|{\bf a}_{\mathrm{R}}(k,m,q){\bf a}_{\mathrm{T}}^{T}(k,m,q){\bf x}_{k,m}\|_2^2\}}. 
\end{align}
\end{small}
Then, we have $\mathrm{CRB}_{\alpha_{R_{q}}}=\mathrm{CRB}_{\alpha_q}+\mathrm{CRB}_{\alpha_{I_{q}}}$.


\vspace{-0.2cm}
\subsection{Conditional CRB for Velocity Estimation}
\vspace{-0.1cm}
Similarly, the velocity part is 
\begin{subequations} \label{Derivation_vel.}
\begin{align} 
  \frac{\partial{\bf A}_{k,m}{\bf x}_{k,m}}{\partial{\bf v}_x} &
  = [\acute{\bf A}_{k,m}(1){\bf x}_{k,m},\cdots,\acute{\bf A}_{k,m}(Q){\bf x}_{k,m}], \label{Derivation_vel.1} \\
  \frac{\partial{\bf A}_{k,m}{\bf x}_{k,m}}{\partial{\bf v}_y} &
  =[\grave{\bf A}_{k,m}(1){\bf x}_{k,m},\cdots,\grave{\bf A}_{k,m}(Q){\bf x}_{k,m}], \label{Derivation_vel.2}
\end{align}
\end{subequations}
where $\acute{\bf A}_{k,m}(q)=[{\acute{\bf A}_{\mathrm{R},k,m}}{\bf \Lambda}_{q} {\bf B}{\bf A}_{\mathrm{T},k,m}^T+{\bf A}_{\mathrm{R},k,m}{\bf B}{\bf \Lambda}_{q}{\acute{\bf A}_{\mathrm{T},k,m}^T}]$ and $\grave{\bf A}_{k,m}(q)$ denotes the partial derivative of ${\bf A}_{k,m}$ w.r.t. $v_{x_q}$ and $v_{y_q}$, respectively, in which any of these is given  
\begin{align*} 
\frac{\partial{\bf a}_{\mathrm{R},\mathrm{T},\ast}(k,m,q) }{\partial v_{x_q}}&= \frac{j2\pi\frac{f_k}{c}mT_{\mathrm{sym}}({x}_q-{x}_{\ast})}{{\|{\mathbb I}_{q}-{\mathbb I}_{\ast}\|}}{\bf a}_{\mathrm{R},\mathrm{T},\ast}(k,m,q)\\
\frac{\partial{\bf a}_{\mathrm{R},\mathrm{T},\ast}(k,m,q) }{\partial v_{y_q}}&= \frac{j2\pi\frac{f_k}{c}mT_{\mathrm{sym}}({y}_q-{y}_{\ast})}{{\|{\mathbb I}_{q}-{\mathbb I}_{\ast}\|}}{\bf a}_{\mathrm{R},\mathrm{T},\ast}(k,m,q).
\end{align*}
Substituting \eqref{Derivation_vel.} into \eqref{FIM_spe} and \eqref{FIM_wide-band}, the CRB related to the velocity estimation of $q$-th target is 
\begin{subequations} \label{CRB_vel.q}
\begin{align}
 \mathrm{CRB}_{v_{x_q}}&= \frac{\sigma_r^2}{2\mathbb{E}\{\sum_{k=1}^K\sum_{m=1}^M\|\acute{\bf A}_{k,m}(q){\bf x}_{k,m}\|_2^2\}} ,    \label{CRB_vel.q1}\\
 \mathrm{CRB}_{v_{y_q}}& = \frac{\sigma_r^2}{2\mathbb{E}\{\sum_{k=1}^K\sum_{m=1}^M\|\grave{\bf A}_{k,m}(q){\bf x}_{k,m}\|_2^2\}}. \label{CRB_vel.q2}
\end{align}
\end{subequations}

\vspace{-0.2cm}
\subsection{Conditional CRB for Location Estimation}
\vspace{-0.1cm}

Similarly, for the partial derivative w.r.t location, we have 
\begin{subequations} \label{Derivation_loc.}
\begin{align} 
  \frac{\partial{\bf A}_{k,m}{\bf x}_{k,m}}{\partial{\bf x}} &=[\ddot{\bf A}_{k,m}(1){\bf x}_{k,m},\cdots,\ddot{\bf A}_{k,m}(Q){\bf x}_{k,m}], \label{Derivation_loc.1}\\
  \frac{\partial{\bf A}_{k,m}{\bf x}_{k,m}}{\partial{\bf y}} & =[\dot{\bf A}_{k,m}(1){\bf x}_{k,m},\cdots,\dot{\bf A}_{k,m}(Q){\bf x}_{k,m}], \label{Derivation_loc.2}
\end{align}
\end{subequations}
where $\ddot{\bf A}_{k,m}(q)=[{\ddot{\bf A}_{\mathrm{R},k,m}}{\bf \Lambda}_{q} {\bf B}{\bf A}_{\mathrm{T},k,m}^T+{\bf A}_{\mathrm{R},k,m}{\bf B}{\bf \Lambda}_{q}{\ddot{\bf A}_{\mathrm{T},k,m}^T}]$ and $\dot{\bf A}_{k,m}(q)=[{\dot{\bf A}_{\mathrm{R},k,m}}{\bf \Lambda}_{q} {\bf B}{\bf A}_{\mathrm{T},k,m}^T+{\bf A}_{\mathrm{R},k,m}{\bf B}{\bf \Lambda}_{q}{\dot{\bf A}_{\mathrm{T},k,m}^T}]$ denotes the partial derivative of ${\bf A}_{k,m}$ w.r.t. $x_q$ and $y_q$, respectively, with each element of the derivative w.r.t $x_q$ being $\frac{\partial{\bf a}_{\mathrm{R},\mathrm{T},\ast}(k,m,q) }{\partial{x_q}}= \ddot{g}_{\mathrm{R},\mathrm{T},\ast}(k,m,q) {\bf a}_{\mathrm{R},\mathrm{T},\ast}(k,m,q)$, where the gain $\ddot{g}_{\mathrm{R},\mathrm{T},\ast}(k,m,q)$ is given by \eqref{derivation_x} arranged at the top of next page
\begin{figure*}[!t]
\begin{subequations} \label{derivation_x}
\begin{flalign} 
  &\dot{g}_{\mathrm{R},\mathrm{T},\ast}(k,m,q) \!=\!(\frac{j2\pi\frac{f_k}{c}(v_{x}^{(q)}mT_{\mathrm{sym}}\!-\!(x_q\!-\!{x}_{\ast}))}{\|{\mathbb I}_{q}-{\mathbb I}_{\ast}\|}-\frac{{x}_q-{x}_{\ast}}{{\|{\mathbb I}_{q}-{\mathbb I}_{\ast}\|^2}} -\frac{j2\pi\frac{f_k}{c}mT_{\mathrm{sym}}(v_{x}^{(q)}({x}_q-{x}_{\ast})^2\!+\!v_{y}^{(q)}({x}_q\!-\!{x}_{\ast})({y}_q\!-\!{y}_{\ast}))}{\|{\mathbb I}_{q}-{\mathbb I}_{\ast}\|^3}). & \\
  &\ddot{g}_{\mathrm{R},\mathrm{T},\ast}(k,m,q) \!=\!(\frac{j2\pi\frac{f_k}{c}(v_{y}^{(q)}mT_{\mathrm{sym}}\!-\!(y_q\!-\!{y}_{\ast}))}{\|{\mathbb I}_{q}-{\mathbb I}_{\ast}\|}-\frac{{y}_q-{y}_{\ast}}{{\|{\mathbb I}_{q}-{\mathbb I}_{\ast}\|^2}} -\frac{j2\pi\frac{f_k}{c}mT_{\mathrm{sym}}(v_{y}^{(q)}({y}_q-{y}_{\ast})^2\!+\!v_{x}^{(q)}({x}_q\!-\!{x}_{\ast})({y}_q\!-\!{y}_{\ast}))}{\|{\mathbb I}_{q}-{\mathbb I}_{\ast}\|^3}). &  
\end{flalign}
\end{subequations}
\hrulefill
\end{figure*}
and each element of the derivation w.r.t $y_q$ can be similarly obtained. 
Substituting \eqref{Derivation_loc.} into \eqref{FIM_spe} and \eqref{FIM_wide-band}, lower bound on CRB of the location estimation for $q$-th target is 
\begin{subequations} \label{CRB_loc.q}
\begin{align}
 \mathrm{CRB}_{x_q}& = \frac{\sigma_r^2}{2\mathbb{E}\{\sum_{k=1}^K\sum_{m=1}^M\|\dot{\bf A}_{k,m}(q){\bf x}_{k,m}\|_2^2\}} ,    \\
 \mathrm{CRB}_{y_q}& = \frac{\sigma_r^2}{2\mathbb{E}\{\sum_{k=1}^K\sum_{m=1}^M\|\ddot{\bf A}_{k,m}(q){\bf x}_{k,m}\|_2^2\}}.
\end{align}
\end{subequations}

With the general CRB expressions given above, we can then explore the CRB for the specific cases. 

\begin{figure*}[!t]
\begin{small}
\begin{subequations} \label{CRB_vel.q_single}
\begin{flalign}
&\mathrm{CRB}_{\alpha_{R_{q}}} \!=\!\mathrm{CRB}_{\alpha_{I_{q}}} =\frac{\sigma_r^2}{2\mathcal{P}{\sum}_{k=1}^K{\sum}_{m=1}^M\|{\bf a}_{\mathrm{R}}(k,m,q)\|_2^2\|{\bf a}_{\mathrm{T}}^{T}(k,m,q)\|_2^2},\\
 &\mathrm{CRB}_{v_{x}}\!=\! \frac{\sigma_r^2}{2|\alpha|^2\{\sum_{k=1}^K\sum_{m=1}^M    (\|\acute{\bf a}_{\mathrm R}(k,m)\|_2^2\|{\bf a}_{\mathrm T}(k,m)\|_2^2\!+\!2\Re{\{\acute{\bf a}_{\mathrm R}^H(k,m){\bf a}_{\mathrm R}(k,m)}{\bf a}_{\mathrm T}^H(k,m)\acute{\bf a}_{\mathrm T}(k,m)\}\!+\!\|{\bf a}_{\mathrm R}(k,m)\|_2^2\|\acute{\bf a}_{\mathrm T}(k,m)\|_2^2) \}},     \label{CRB_vel.q1_single}\\
 &\mathrm{CRB}_{v_{y}} \!=\!\frac{\sigma_r^2}{2|\alpha|^2\{\sum_{k=1}^K\sum_{m=1}^M    (\|\grave{\bf a}_{\mathrm R}(k,m)\|_2^2\|{\bf a}_{\mathrm T}(k,m)\|_2^2\!+\!2\Re{\{\grave{\bf a}_{\mathrm R}^H(k,m){\bf a}_{\mathrm R}(k,m)}{\bf a}_{\mathrm T}^H(k,m)\grave{\bf a}_{\mathrm T}(k,m)\}\!+\!\|{\bf a}_{\mathrm R}(k,m)\|_2^2\|\grave{\bf a}_{\mathrm T}(k,m)\|_2^2) \}}, \label{CRB_vel.q2_single} \\
 &\mathrm{CRB}_{x} \!=\! \frac{\sigma_r^2}{2|\alpha|^2\{\sum_{k=1}^K\sum_{m=1}^M    (\|\dot{\bf a}_{\mathrm R}(k,m)\|_2^2\|{\bf a}_{\mathrm T}(k,m)\|_2^2\!+\!2\Re{\{\dot{\bf a}_{\mathrm R}^H(k,m){\bf a}_{\mathrm R}(k,m)}{\bf a}_{\mathrm T}^H(k,m)\dot{\bf a}_{\mathrm T}(k,m)\}\!+\!\|{\bf a}_{\mathrm R}(k,m)\|_2^2\|\dot{\bf a}_{\mathrm T}(k,m)\|_2^2) \}},    \label{CRB_loc.q1_single}\\
 &\mathrm{CRB}_{y} \!=\!\frac{\sigma_r^2}{2|\alpha|^2\{\sum_{k=1}^K\sum_{m=1}^M    (\|\ddot{\bf a}_{\mathrm R}(k,m)\|_2^2\|{\bf a}_{\mathrm T}(k,m)\|_2^2\!+\!2\Re{\{\ddot{\bf a}_{\mathrm R}^H(k,m){\bf a}_{\mathrm R}(k,m)}{\bf a}_{\mathrm T}^H(k,m)\ddot{\bf a}_{\mathrm T}(k,m)\}\!+\!\|{\bf a}_{\mathrm R}(k,m)\|_2^2\|\ddot{\bf a}_{\mathrm T}(k,m)\|_2^2) \}}. \label{CRB_loc.q2_single}
\end{flalign}
\end{subequations}
\end{small}
\hrulefill
\end{figure*}

\vspace{-0.2cm}
\subsection{Approximated CRB for Single Target Case}
\vspace{-0.1cm}

Note that existing CRB analyses for MIMO radar in the wide-band far-field regime \cite{liang2021CramerRao} and for narrow-band near-field operation \cite{wei2025PartI} typically assume static objects, which limits their relevance to scenarios with non-negligible motion. In Part~II, we therefore investigate the CRB performance of the proposed wide-band near-field framework under a moving-object model, where location–motion coupling must be explicitly accounted for across subcarriers and slow time. Again, we use the abbreviations FF and NF to denote far-field and near-field, respectively. For simplicity, we consider the transmit power for each subcarrier is identical, i.e., $\mathcal{P}_1=\cdots=\mathcal{P}_K=\mathcal{P}$.

\subsubsection{RCS Estimation}
We can first define the gains of Tx and Rx at the $k$-th subcarrier and $m$-th OFDM symbol, respectively, as
\begin{subequations} \label{gain_Tx_Rx}
 \begin{align} 
   G_{\mathrm{Tx}} &\!=\! \|{\bf a}_{\mathrm{T}}(k,m,q)\|_2^2 = \frac{\lambda_k^2}{16\pi^2}\sum_{n_t=1}^{N_t}\frac{1}{r_{n_t}^2} = \frac{\lambda_k^2}{16\pi^2}g_{\mathrm {Tx}}, \label{gain_Tx}\\ 
   G_{\mathrm{Rx}} &\!=\! \|{\bf a}_{\mathrm{R}}(k,m,q)\|_2^2 = \frac{\lambda_k^2}{16\pi^2}\sum_{n_r=1}^{N_r}\frac{1}{r_{n_r}^2} = \frac{\lambda_k^2}{16\pi^2}g_{\mathrm {Rx}}, \label{gain_Rx}
 \end{align}   
\end{subequations}
where $r_{n_t}=\|{\mathbb I}_{q}-{\mathbb I}_{n_t}\|$ and $r_{n_r}=\|{\mathbb I}_{q}-{\mathbb I}_{n_r}\|$. 
Then, we have 
\begin{flalign} \label{CRB_RCS}
\mathrm{CRB}_{\alpha_{R}}=\mathrm{CRB}_{\alpha_{I}}=\frac{256\sigma_r^2\pi^{4}}{2\mathcal{P}Mg_{\mathrm {Tx}}g_{\mathrm {Rx}}\widetilde{\bm\Lambda}}. 
\end{flalign} 
where $\widetilde{\bm\Lambda}_4=\sum_{k=1}^{K}\lambda_k^4$. 
Moreover, we denote $\mathrm{CRB}_{\alpha}=\mathrm{CRB}_{\alpha_{R}}+\mathrm{CRB}_{\alpha_{I}}$ by the CRB for RCS estimation. 

\begin{theorem}
For a single target with co-located Tx/Rx arrays
along with x-axis centered at the origin, the approximated CRBs for RCS
estimation, in terms of far-field and near-field cases, are,
respectively, given by
\begin{subequations} \label{CRB_RCS.2}
\begin{flalign} 
&\mathrm{CRB}_{\alpha}^{\mathrm{FF}}  \approx \frac{256\sigma_r^2\pi^{4}(r_{\mathrm{Tx}}^{\mathrm{FF}}r_{\mathrm{Rx}}^{\mathrm{FF}})^2}{\mathcal{P}M{N_t}{N_r}\widetilde{\bm\Lambda}_4}. \label{CRB_RCS.1.2}\\
&\mathrm{CRB}_{\alpha}^{\mathrm{NF}} \! \approx \! \frac{256\sigma_r^2\pi^{4}(r_{\mathrm{Tx}}^{\mathrm{NF}}r_{\mathrm{Rx}}^{\mathrm{NF}})^2}{\mathcal{P}M{N_t}{N_r}(1+\Delta_{\mathrm{Tx}})(1+\Delta_{\mathrm{Rx}})\widetilde{\bm\Lambda}_4}, \label{CRB_RCS.2.2}
\end{flalign} 
\end{subequations}
where $r_{\mathrm{Tx}}^{\mathrm{FF}}$ and $r_{\mathrm{Rx}}^{\mathrm{FF}}$ denote the range between the far-field target and reference element of Tx and Rx, respectively,  $\Delta_{\mathrm{Tx}}=\frac{(N_t^{2}-1)d_t(4\sin^2\theta_{\mathrm{Tx},q} - 1)}{12 r_{\mathrm{Tx}}^2}$ and $\Delta_{\mathrm{Rx}}=\frac{(N_r^{2}-1)d_r(4\sin^2\theta_{\mathrm{Rx},q} - 1)}{12 r_{\mathrm{Rx}}^2}$.
\end{theorem}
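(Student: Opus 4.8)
The plan is to reduce the whole statement to the two scalar aggregate gains $g_{\mathrm{Tx}}=\sum_{n_t=1}^{N_t}r_{n_t}^{-2}$ and $g_{\mathrm{Rx}}=\sum_{n_r=1}^{N_r}r_{n_r}^{-2}$ introduced in \eqref{gain_Tx_Rx}. Adding the two equal contributions in \eqref{CRB_RCS} gives $\mathrm{CRB}_\alpha=\mathrm{CRB}_{\alpha_R}+\mathrm{CRB}_{\alpha_I}=256\sigma_r^2\pi^4/(\mathcal{P}M g_{\mathrm{Tx}}g_{\mathrm{Rx}}\widetilde{\bm\Lambda}_4)$, in which the factor of two cancels and the Tx and Rx gains enter multiplicatively and separably. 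Consequently the entire proof amounts to evaluating $g_{\mathrm{Tx}}$ and $g_{\mathrm{Rx}}$ in each regime, and by symmetry the two sides are handled by an identical argument.

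First I would fix the geometry. With the centered ULA on the $x$-axis, place the $n_t$-th element at $x_{n_t}=(n_t-(N_t+1)/2)d_t$, denote the target range by $r_{\mathrm{Tx}}$ and its angle by $\theta_{\mathrm{Tx},q}$ so that $x_q=r_{\mathrm{Tx}}\sin\theta_{\mathrm{Tx},q}$, and record the exact element–target distance $r_{n_t}^2=r_{\mathrm{Tx}}^2-2r_{\mathrm{Tx}}\sin\theta_{\mathrm{Tx},q}\,x_{n_t}+x_{n_t}^2$. The far-field case is then immediate: as $x_{n_t}/r_{\mathrm{Tx}}\to0$ all element–target ranges collapse to the common reference range $r_{\mathrm{Tx}}^{\mathrm{FF}}$, giving $g_{\mathrm{Tx}}^{\mathrm{FF}}=N_t/(r_{\mathrm{Tx}}^{\mathrm{FF}})^2$ and $g_{\mathrm{Rx}}^{\mathrm{FF}}=N_r/(r_{\mathrm{Rx}}^{\mathrm{FF}})^2$; substituting these into the reduced CRB reproduces \eqref{CRB_RCS.1.2}.

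For the near-field case I would expand $r_{n_t}^{-2}=r_{\mathrm{Tx}}^{-2}(1-2\sin\theta_{\mathrm{Tx},q}\,u+u^2)^{-1}$ with $u=x_{n_t}/r_{\mathrm{Tx}}$ to second order, obtaining $r_{n_t}^{-2}\approx r_{\mathrm{Tx}}^{-2}[1+2\sin\theta_{\mathrm{Tx},q}\,u+(4\sin^2\theta_{\mathrm{Tx},q}-1)u^2]$. Summing over the symmetric array annihilates the linear term through $\sum_{n_t}x_{n_t}=0$, while the quadratic term is handled by the aperture identity $\sum_{n_t}x_{n_t}^2=d_t^2N_t(N_t^2-1)/12$. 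This yields $g_{\mathrm{Tx}}^{\mathrm{NF}}=(N_t/r_{\mathrm{Tx}}^2)(1+\Delta_{\mathrm{Tx}})$ with $\Delta_{\mathrm{Tx}}$ of the stated form, and analogously for $g_{\mathrm{Rx}}^{\mathrm{NF}}$; inserting both into the reduced CRB produces \eqref{CRB_RCS.2.2}.

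The hard part will be the bookkeeping in the second-order expansion: one must verify that the $O(u)$ contribution cancels exactly by the centering symmetry and that the surviving $O(u^2)$ coefficient carries the correct $(4\sin^2\theta_{\mathrm{Tx},q}-1)$ factor, so that the sign of the near-field correction flips as the target moves from broadside toward endfire. The remaining delicacy is to confirm that keeping only the quadratic curvature term is consistent with the Fresnel-zone regime assumed throughout, i.e., that the neglected higher-order terms in $u$ are indeed small relative to $\Delta_{\mathrm{Tx}}$ and $\Delta_{\mathrm{Rx}}$.
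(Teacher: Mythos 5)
Your proposal is correct and follows essentially the same route as the paper: the paper omits the derivation by deferring to Appendix~A of Part~I, and what that appendix does is precisely your reduction via \eqref{gain_Tx_Rx} and \eqref{CRB_RCS} to the separable aggregate gains $g_{\mathrm{Tx}},g_{\mathrm{Rx}}$, followed by the second-order expansion of $\sum_{n} r_n^{-2}$ in $u=x_n/r$ (linear term annihilated by array centering, quadratic term evaluated via $\sum_n x_n^2 = d^2 N(N^2-1)/12$), with the only wide-band novelty being the subcarrier aggregation into $\widetilde{\bm\Lambda}_4$, which is already built into \eqref{CRB_RCS}. One point you should have flagged rather than silently matched: your expansion correctly yields $\Delta_{\mathrm{Tx}}=\frac{(N_t^{2}-1)d_t^{2}(4\sin^2\theta_{\mathrm{Tx},q}-1)}{12\, r_{\mathrm{Tx}}^{2}}$ with $d_t^{2}$ (dimensionless, as it must be), whereas the theorem statement prints a bare $d_t$ --- a typo in the paper (consistent with the $d_t^{2}$ second-order terms quoted before Theorem~2), not a defect of your argument.
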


\begin{proof}
The proof is similar to Appendix A in Part I. The only
difference is that the approximation of wide-band steering vector is related to the subcarrier index $k$, which lead to the final wide-band CRB contains $\widetilde{\bm\Lambda}_4$.
So we omit the detailed derivation herein. 
\end{proof}

\subsubsection{Velocity Estimation} Then, for the wide-band velocity CRB, we also have the following Theorem.

\begin{theorem}
Then, for the far-field target, we omit the second-order items $\frac{(N_t^2-1)d_t^2}{12{r_{\mathrm{Tx}}^{\mathrm{FF}}}^2}$ and $\frac{(N_r^2-1)d_r^2}{12{r_{\mathrm{Rx}}^{\mathrm{FF}}}^2}$, and have
\begin{small}
\begin{subequations}
\label{CRB_vel_FF_final}
\begin{align}
\hspace*{-0.5em}\mathrm{CRB}_{v_x}^{\mathrm{FF}}
&\!\approx\!
\frac{32\pi^2\sigma_r^2(r_{\mathrm{Tx}}^{\mathrm{FF}}r_{\mathrm{Rx}}^{\mathrm{FF}})^2}
{|\alpha|^2\mathcal{P}N_tN_rT_{\mathrm{sym}}^2
C_M
\big(\sin\theta_{\mathrm{Tx},q}\!+\!\sin\theta_{\mathrm{Rx},q}\big)^2\widetilde{\bm\Lambda}_2}, \\
\hspace*{-0.5em}\mathrm{CRB}_{v_y}^{\mathrm{FF}}
&\!\approx\!
\frac{32\pi^2\sigma_r^2(r_{\mathrm{Tx}}^{\mathrm{FF}}r_{\mathrm{Rx}}^{\mathrm{FF}})^2}
{|\alpha|^2\mathcal{P}N_tN_rT_{\mathrm{sym}}^2
C_M
\big(\cos\theta_{\mathrm{Tx},q}\!+\!\cos\theta_{\mathrm{Rx},q}\big)^2\widetilde{\bm\Lambda}_2}.
\end{align}
\end{subequations}
\end{small}
\noindent\hspace{-0.5em}where $C_M=\frac{M(M+1)(2M+1)}{6}$ and $\widetilde{\bm\Lambda}_2=\sum_{k=1}^{K}\lambda_k^2$.
Then, for the near-field target, we have
\begin{small}
\begin{subequations}
\label{CRB_vel_NF_final}
\begin{align}
&\hspace*{-1em}\mathrm{CRB}_{v_x}^{\mathrm{NF}}
\approx
\frac{32\pi^2\sigma_r^2({r{\mathrm{Tx}}^{\mathrm{NF}}}{r_{\mathrm{Rx}}^{\mathrm{NF}}})^2}
{|\alpha|^2\mathcal{P}N_tN_rT_{\mathrm{sym}}^2
C_M\big(\sin\theta_{\mathrm{Tx},q}\!+\!\sin\theta_{\mathrm{Rx},q}\big)^2
\Psi\widetilde{\bm\Lambda}_2},\\
&\hspace*{-1em}\mathrm{CRB}_{v_y}^{\mathrm{NF}} 
\approx
\frac{32\pi^2\sigma_r^2({r{\mathrm{Tx}}^{\mathrm{NF}}}{r_{\mathrm{Rx}}^{\mathrm{NF}}})^2}
{|\alpha|^2\mathcal{P}N_tN_rT_{\mathrm{sym}}^2
C_M\big(\cos\theta_{\mathrm{Tx},q}\!+\!\cos\theta_{\mathrm{Rx},q}\big)^2
\Psi\widetilde{\bm\Lambda}_2},
\end{align}
\end{subequations}
\end{small}
where the near-field correction terms $\Psi_{x}$ and $\Psi_{y}$ are same as the definition in Part I, see \cite{wei2025PartI} in detail.
\end{theorem}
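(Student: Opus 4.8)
The plan is to begin from the single-target velocity bounds \eqref{CRB_vel.q1_single}--\eqref{CRB_vel.q2_single} and reduce each denominator to closed form. The starting observation is that, under the single-target co-located geometry, the velocity-derivative steering vectors inherit their structure from the base steering vectors through a purely multiplicative phase-rate: from the per-element derivative established above, $\acute a_{\mathrm R,\mathrm T,\ast}(k,m)=j2\pi\tfrac{f_k}{c}mT_{\mathrm{sym}}\tfrac{x_q-x_\ast}{r_\ast}\,a_{\mathrm R,\mathrm T,\ast}(k,m)$, and under the centered $x$-axis array convention the projection $\tfrac{x_q-x_\ast}{r_\ast}$ reduces to the element-wise $\sin\theta_{\mathrm{Rx/Tx},\ast}$ (while the $v_y$-derivative $\grave a$ carries $\tfrac{y_q-y_\ast}{r_\ast}\to\cos\theta_{\ast}$). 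The entire reduction then turns on how this projection varies across array elements in the two regimes.

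For the far-field case I would invoke the collapse $\theta_{\ast}\to\theta_{\mathrm{Rx/Tx},q}$ for every element, so that the phase-rate $\beta_{\mathrm R}=2\pi\tfrac{f_k}{c}mT_{\mathrm{sym}}\sin\theta_{\mathrm{Rx},q}$ and its Tx analogue $\beta_{\mathrm T}$ become element-independent and factor out of the norms. The three denominator terms of \eqref{CRB_vel.q1_single} then evaluate to $\beta_{\mathrm R}^2G_{\mathrm{Rx}}G_{\mathrm{Tx}}$, $2\beta_{\mathrm R}\beta_{\mathrm T}G_{\mathrm{Rx}}G_{\mathrm{Tx}}$, and $\beta_{\mathrm T}^2G_{\mathrm{Rx}}G_{\mathrm{Tx}}$ (the cross term is real and positive because the induced $-j$ and $+j$ factors cancel), assembling into the perfect square $(\beta_{\mathrm R}+\beta_{\mathrm T})^2 G_{\mathrm{Rx}}G_{\mathrm{Tx}}\propto(\sin\theta_{\mathrm{Rx},q}+\sin\theta_{\mathrm{Tx},q})^2$. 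I would then insert the gain identities \eqref{gain_Tx}--\eqref{gain_Rx} and use $f_k/c=1/\lambda_k$, so that the $(f_k/c)^2$ from $\beta^2$ cancels two of the four powers of $\lambda_k$ in $G_{\mathrm{Rx}}G_{\mathrm{Tx}}$ and leaves exactly $\lambda_k^2$; summing over $k$ produces $\widetilde{\bm\Lambda}_2=\sum_k\lambda_k^2$ (in contrast to $\widetilde{\bm\Lambda}_4$ for RCS, which lacks the extra derivative factor), and summing $m^2$ over slow time gives $C_M=\tfrac{M(M+1)(2M+1)}{6}$. Finally, dropping the second-order aperture terms $\tfrac{(N_t^2-1)d_t^2}{12(r_{\mathrm{Tx}}^{\mathrm{FF}})^2}$ and $\tfrac{(N_r^2-1)d_r^2}{12(r_{\mathrm{Rx}}^{\mathrm{FF}})^2}$ lets me set $g_{\mathrm{Tx}}\approx N_t/(r_{\mathrm{Tx}}^{\mathrm{FF}})^2$ and $g_{\mathrm{Rx}}\approx N_r/(r_{\mathrm{Rx}}^{\mathrm{FF}})^2$; collecting the numerical constant $32\pi^2=(16\pi^2)^2/(2(2\pi)^2)$ yields \eqref{CRB_vel_FF_final}, with the $v_y$ bound following identically after $\sin\theta\to\cos\theta$.

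For the near-field case I would run the same algebraic skeleton but retain the element dependence of $\sin\theta_\ast$ and $r_\ast$, expanding them by the Fresnel (second-order Taylor) approximation of the spherical-wave geometry about the array center. Because the phase-rate no longer pulls out of the element sums, the perfect-square identity holds only at leading order; the residual second-order aperture-curvature contributions, once normalized by that leading far-field factor, condense into the multiplicative correction $\Psi$, with $\Psi_x$ and $\Psi_y$ identical to the Fresnel-expansion factors already derived in Part~I. I expect this step to be the principal obstacle: one must verify that the cross term $2\Re\{\acute{\bf a}_{\mathrm R}^H{\bf a}_{\mathrm R}\,{\bf a}_{\mathrm T}^H\acute{\bf a}_{\mathrm T}\}$ and the two squared-norm terms still recombine into the same $(\sin\theta_{\mathrm{Tx},q}+\sin\theta_{\mathrm{Rx},q})^2$ angular form at leading order, and that every residual index sum $\sum_n (n-\bar n)^2$ collapses into the single factor $\Psi$ rather than spawning additional coupling terms---that is, that the Fresnel expansion is self-consistent to second order and that higher-order cross-contributions are genuinely negligible under the single-target, centered-array assumptions. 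Since this is precisely the manipulation carried out in Appendix~A of Part~I, I would reuse that argument verbatim and merely carry along the extra subcarrier summation that converts the per-subcarrier result into the wide-band factor $\widetilde{\bm\Lambda}_2$.
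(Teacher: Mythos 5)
Your proposal is correct and follows essentially the same route as the paper, whose proof simply defers to Appendix~B of Part~I with the observation that the only wide-band modifications are the per-subcarrier aggregation (yielding $\widetilde{\bm\Lambda}_2$ via $f_k/c=1/\lambda_k$ cancelling two of the four wavelength powers in $G_{\mathrm{Tx}}G_{\mathrm{Rx}}$) and the slow-time sum $\sum_{m=1}^{M}m^2=C_M$ --- exactly the reduction you reconstruct, including the perfect-square assembly of the three denominator terms and the $32\pi^2$ bookkeeping. The only slip is cosmetic: the relevant Part~I appendix for the velocity bounds is Appendix~B (Appendix~A covers the RCS case), not Appendix~A as you cite.
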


\begin{proof}
    See Appendix B in Part I \cite{wei2025PartI}.
\end{proof}

\subsubsection{Location Estimation}
To obtain tractable closed-form CRBs, we retain only the dominant phase derivative with respect to the path length and neglect both the amplitude derivative and the higher-order Doppler–location coupling terms. Similarly, for the wide-band CRB of location estimation, we have the following Theorem. 

\begin{theorem}
We have the far-field wide-band CRB approximation for localization as  
\begin{small}
\begin{subequations}
\label{CRB_loc_FF_NB}
\begin{align}
\mathrm{CRB}_{x}^{\mathrm{FF}}
&\approx
\frac{32\pi^2\sigma_r^2
\big(r_{\mathrm{Tx}}^{\mathrm{FF}}r_{\mathrm{Rx}}^{\mathrm{FF}}\big)^2}
{|\alpha|^2\mathcal PMN_tN_r
\Big(\sin\theta_{\mathrm{Tx},q}+\sin\theta_{\mathrm{Rx},q}\Big)^2
\widetilde{\bm\Lambda}_2},\\
\mathrm{CRB}_{y}^{\mathrm{FF}}
&\approx
\frac{32\pi^2\sigma_r^2
\big(r_{\mathrm{Tx}}^{\mathrm{FF}}r_{\mathrm{Rx}}^{\mathrm{FF}}\big)^2}
{|\alpha|^2\mathcal PMN_tN_r
\Big(\cos\theta_{\mathrm{Tx},q}+\cos\theta_{\mathrm{Rx},q}\Big)^2
\widetilde{\bm\Lambda}_2}.
\end{align}
\end{subequations}
\end{small}
Similarly, for the near-field target, we have 
\begin{small}
\begin{subequations}
\label{CRB_loc_NF_NB}
\begin{align}
\mathrm{CRB}_{x}^{\mathrm{NF}}
&\approx
\frac{32\pi^2\sigma_r^2
\big(r_{\mathrm{Tx}}^{\mathrm{NF}}r_{\mathrm{Rx}}^{\mathrm{NF}}\big)^2}
{|\alpha|^2\mathcal PMN_tN_r
\Big(\sin\theta_{\mathrm{Tx},q}+\sin\theta_{\mathrm{Rx},q}\Big)^2
\Psi_x\widetilde{\bm\Lambda}_2},\\
\mathrm{CRB}_{y}^{\mathrm{NF}}
&\approx
\frac{32\pi^2\sigma_r^2
\big(r_{\mathrm{Tx}}^{\mathrm{NF}}r_{\mathrm{Rx}}^{\mathrm{NF}}\big)^2}
{|\alpha|^2\mathcal PMN_tN_r
\Big(\cos\theta_{\mathrm{Tx},q}+\cos\theta_{\mathrm{Rx},q}\Big)^2
\Psi_y\widetilde{\bm\Lambda}_2}.
\end{align}
\end{subequations}
\end{small} 
\end{theorem}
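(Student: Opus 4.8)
The plan is to begin from the exact single-target conditional localization CRBs in \eqref{CRB_loc.q1_single}--\eqref{CRB_loc.q2_single} and to simplify the per-element gain derivative $\dot g_{\mathrm{R},\mathrm{T},\ast}(k,m,q)$ of \eqref{derivation_x} under the approximation stated just before the theorem. Concretely, I would retain only the dominant phase-versus-path-length contribution $-j2\pi\frac{f_k}{c}\frac{x_q-x_\ast}{\|{\mathbb I}_q-{\mathbb I}_\ast\|}$, discarding the amplitude derivative $-\frac{x_q-x_\ast}{\|{\mathbb I}_q-{\mathbb I}_\ast\|^2}$ and the higher-order Doppler--location coupling carrying the factor $mT_{\mathrm{sym}}/\|{\mathbb I}_q-{\mathbb I}_\ast\|^3$. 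With this simplification each entry of $\dot{\bf a}_{\mathrm R}$ (resp.\ $\dot{\bf a}_{\mathrm T}$) becomes a purely imaginary, $m$-independent scaling of the corresponding entry of ${\bf a}_{\mathrm R}$ (resp.\ ${\bf a}_{\mathrm T}$). This is exactly the structure appearing in the velocity proof behind \eqref{CRB_vel_FF_final}, except that the Doppler derivative factor $2\pi\frac{f_k}{c}mT_{\mathrm{sym}}\frac{x_q-x_\ast}{\|{\mathbb I}_q-{\mathbb I}_\ast\|}$ is replaced by the $m$-independent $2\pi\frac{f_k}{c}\frac{x_q-x_\ast}{\|{\mathbb I}_q-{\mathbb I}_\ast\|}$. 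Consequently the slow-time sum $\sum_m (mT_{\mathrm{sym}})^2=T_{\mathrm{sym}}^2 C_M$ collapses to $\sum_m 1=M$, and this single substitution $T_{\mathrm{sym}}^2 C_M\mapsto M$ is essentially the only bookkeeping change between the velocity and localization results.

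Next, for the far-field case I would invoke the broadside-angle convention so that the direction cosines satisfy $\frac{x_q-x_\ast}{\|{\mathbb I}_q-{\mathbb I}_\ast\|}\to\sin\theta_{\mathrm{Rx},q}$ (resp.\ $\sin\theta_{\mathrm{Tx},q}$), which renders the element gain factors constant across the aperture; write $\beta_{\mathrm R}=2\pi\frac{f_k}{c}\sin\theta_{\mathrm{Rx},q}$ and $\beta_{\mathrm T}=2\pi\frac{f_k}{c}\sin\theta_{\mathrm{Tx},q}$. Expanding the three terms in the denominator of \eqref{CRB_loc.q1_single} then gives $\|\dot{\bf a}_{\mathrm R}\|_2^2\|{\bf a}_{\mathrm T}\|_2^2+2\Re\{\dot{\bf a}_{\mathrm R}^H{\bf a}_{\mathrm R}\,{\bf a}_{\mathrm T}^H\dot{\bf a}_{\mathrm T}\}+\|{\bf a}_{\mathrm R}\|_2^2\|\dot{\bf a}_{\mathrm T}\|_2^2=(\beta_{\mathrm R}+\beta_{\mathrm T})^2 G_{\mathrm{Rx}}G_{\mathrm{Tx}}$, where the cross term supplies exactly the $2\beta_{\mathrm R}\beta_{\mathrm T}$ needed to complete the square. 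I would then substitute $G_{\mathrm{Rx}},G_{\mathrm{Tx}}$ from \eqref{gain_Tx_Rx}, use $f_k/c=1/\lambda_k$ so that the $\lambda_k$ powers combine into $\lambda_k^2$, carry the transmit power $\mathcal P$ from the orthogonal-symbol covariance $\mathbb{E}\{{\bf x}_{k,m}{\bf x}_{k,m}^H\}=\mathcal P{\bf I}_{N_t}$, perform the sum over $m$ (factor $M$) and over $k$ (factor $\widetilde{\bm\Lambda}_2=\sum_k\lambda_k^2$), and apply $g_{\mathrm{Tx}}\approx N_t/(r_{\mathrm{Tx}}^{\mathrm{FF}})^2$, $g_{\mathrm{Rx}}\approx N_r/(r_{\mathrm{Rx}}^{\mathrm{FF}})^2$. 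Collecting the numerical constants reproduces \eqref{CRB_loc_FF_NB}; the $y$-component follows identically with $\sin\theta\mapsto\cos\theta$, since the only change is $x_q-x_\ast\mapsto y_q-y_\ast$ whose direction cosine tends to $\cos\theta_{\mathrm{Rx/Tx},q}$.

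Finally, for the near-field case I would repeat the same denominator expansion but refrain from collapsing the per-element direction cosine $\frac{x_q-x_\ast}{\|{\mathbb I}_q-{\mathbb I}_\ast\|}$ to its far-field limit; instead I would Taylor-expand this ratio and the inverse-square gain $1/\|{\mathbb I}_q-{\mathbb I}_\ast\|^2$ about the array center to second order in the element offset (i.e.\ in $n\,d_t$ and $n\,d_r$ relative to the range), and then sum over elements. The residual aperture-dependent sums coincide with the near-field correction factors $\Psi_x,\Psi_y$ already established in Part~I \cite{wei2025PartI}, so they can be quoted rather than recomputed, yielding \eqref{CRB_loc_NF_NB}. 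The step I expect to be the main obstacle is precisely this near-field bookkeeping: verifying that after the second-order expansion and the element summation the cross-aperture contributions aggregate into the \emph{same} $\Psi_x,\Psi_y$ as in the narrow-band near-field localization proof, so that the wide-band result departs from Part~I only through the frequency aggregation $\widetilde{\bm\Lambda}_2$. Once this correspondence is pinned down, the remaining algebra is routine and parallels the far-field case term by term.
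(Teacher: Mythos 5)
Your proposal is correct and follows essentially the same route the paper intends, which itself omits the details by deferring to Appendix~B of Part~I: retaining only the dominant phase-versus-path-length derivative, the complete-the-square expansion $\|\dot{\bf a}_{\mathrm R}\|_2^2\|{\bf a}_{\mathrm T}\|_2^2+2\Re\{\dot{\bf a}_{\mathrm R}^H{\bf a}_{\mathrm R}{\bf a}_{\mathrm T}^H\dot{\bf a}_{\mathrm T}\}+\|{\bf a}_{\mathrm R}\|_2^2\|\dot{\bf a}_{\mathrm T}\|_2^2=(\beta_{\mathrm R}+\beta_{\mathrm T})^2G_{\mathrm{Rx}}G_{\mathrm{Tx}}$, the substitution $T_{\mathrm{sym}}^2C_M\mapsto M$ relative to the velocity theorem, the frequency aggregation into $\widetilde{\bm\Lambda}_2$, and quoting the Part-I near-field corrections $\Psi_x,\Psi_y$ are exactly the adaptation the paper's one-line proof points to. Your constant bookkeeping also checks out: combining $(2\pi/\lambda_k)^2$ with $G_{\mathrm{Tx}}G_{\mathrm{Rx}}=\frac{\lambda_k^4}{256\pi^4}g_{\mathrm{Tx}}g_{\mathrm{Rx}}$ and the factor $2|\alpha|^2\mathcal{P}M$ reproduces the $32\pi^2$ prefactor in \eqref{CRB_loc_FF_NB} exactly.
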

\begin{proof}
    We omit the proof due to the similarity to Appendix B in Part I.
\end{proof}


\begin{remark}
In wide-band systems, the Fisher information aggregates across orthogonal subcarriers, which introduces wavelength-sum factors such as
$\widetilde{\Lambda}_2 \triangleq \sum_{k=1}^{K}\lambda_k^{2}$ and
$\widetilde{\Lambda}_4 \triangleq \sum_{k=1}^{K}\lambda_k^{4}$ in the resulting CRB expressions. 
For any nonzero bandwidth, these sums satisfy
$\widetilde{\Lambda}_2 > K\lambda_c^{2}$ and $\widetilde{\Lambda}_4 > K\lambda_c^{4}$ (with equality only in the degenerate single-tone case), implying that wide-band signaling provides strictly larger information than the narrow-band case under the same CPI and average transmit power. 
Consequently, the wide-band CRBs for RCS, velocity, and localization derived in Part~II are generally lower than their narrow-band counterparts reported in Part~I; see \cite{wei2025PartI}.
\end{remark}


\section{Numerical Experiments}
 
Throughout the simulations, the system parameters are configured according to the potential 6G frequency bands identified by the ITU-R and recent studies on upper-mid (FR3) spectrum allocation \cite{testolina2024sharing,emil2025enabling}.

\begin{table}[!t] \label{tab:param}
\caption{Simulation Parameters Setting}
\begin{center}
  \begin{tabular}{p{5.0cm}|p{1.0cm}|p{1.5cm}}\hline\hline
  Parameter                          & Symbol               &  Value \\   \hline
  Central frequency                  & $f_c$                &  15 GHz \\   \hline
  Subcarrier spacing                 & $\triangle{f}$       &  120 KHz \\   \hline 
  Element No. of Tx                  & $N_{\mathrm{Tx}}$    &  256   \\   \hline  
  Element No. of Rx                  & $N_{\mathrm{Rx}}$    &  256        \\   \hline
  No. of sensing targets             & $Q$                  &  1$\thicksim$3   \\   \hline 
  No. of subcarriers                 & $K$                  &  128   \\   \hline 
  No. of OFDM symbol                 & $M$                  &  254   \\   \hline   
  Noise power                        & $\sigma_r^2$         &  -114 dBm \\     \hline  
  Transmit power at each subcarrier  & $\mathcal{P}_k$      &  20 dBm \\   \hline  
  \end{tabular}
\end{center}
\end{table}

\vspace{-0.2cm}
\subsection{Wide-band Parameter Settings}
\vspace{-0.1cm} 
The carrier frequency is set to $f_c=15$~GHz and the speed of light is $c=3\times 10^8$~m/s. We employ an OFDM waveform with subcarrier spacing $\Delta f=120$~kHz and $K=128$ subcarriers, yielding an occupied bandwidth $B=K\Delta f=15.36$~MHz. Each coherent processing interval (CPI) consists of $M=256$ OFDM symbols with symbol duration $T_{\mathrm{sym}}=1/\Delta f$, i.e., $T_{\mathrm{CPI}}=MT_{\mathrm{sym}}$. The subcarrier frequency grid is centered around $f_c$ as $f_k=f_c+\left(k-\frac{K-1}{2}\right)\Delta f$, and the corresponding wavelengths are $\lambda_k=c/f_k$.
Both Tx and Rx employ centered ULAs along the $x$-axis with $N_t=N_r=256$ elements and half-wavelength spacing $d=\lambda_c/2$, where $\lambda_c=c/f_c$. The Tx/Rx element coordinates are ${\mathbb I}_{n_t}=(x_{n_t},0)$ and ${\mathbb I}_{n_r}=(x_{n_r},0)$ with $x_{n}=\left(n-\frac{N-1}{2}\right)d$. The array aperture is $D=(N-1)d$, and the reactive and radiative near-field boundaries are computed via $0.62\sqrt{D^3/\lambda_c}$ and $2D^2/\lambda_c$, respectively. Additive noise is modeled as spatially and temporally white circular Gaussian with variance ${\sigma}_r^2 = -114$ dBm and transmit power is fixed to $\mathcal{P}=0.1$ W. The relative error is calculated by $\epsilon_{\mathrm{rel}} =\left|\left(\mathrm{CRB}_{\mathrm{approx.}}-\mathrm{CRB}_{\mathrm{true}}\right)/\mathrm{CRB}_{\mathrm{true}}\right|$.

\vspace{-0.2cm}
\subsection{CRB with Single Target and Monostatic}
\vspace{-0.1cm}
First, let us consider the single target and monostatic setup, i.e., Tx and Rx have the same antenna and are located in the same position. 
For the single target case, we set the RCS, velocity and location as ${\alpha}=1+0.1j$,  ${\bf v}=[1,4]$ m/s and $(20^\circ,100~\mathrm{m})$, respectively.  Hereafter, we use the terms "True", "FF" and "NF" denote the true CRB, far-field approximation, and near-field approximation, respectively.   

Fig.~\ref{RCS_range} illustrates the CRB for estimating the target RCS versus range in the wide-band near-field setting, where multiple bandwidth configurations  ($K\in\{8,32,64,128\}$) are considered by varying the number of subcarriers  while keeping the subcarrier spacing fixed. The true CRB increases monotonically with range for all $K$, which is mainly due to the reduced received signal strength as the propagation distance grows. Importantly, increasing $K$ consistently reduces the bound, demonstrating the benefit of frequency diversity in wide-band sensing: a larger occupied bandwidth provides additional phase and range sensitivity across subcarriers and increases the total Fisher information accumulated over frequency.
Fig.~\ref{RCS_range} also compares the proposed near-field (NF) approximation and the far-field (FF) approximation with the true CRB and reports their relative errors on the right axis. The NF approximation remains closely aligned with the true CRB across the full range interval for all $K$, yielding uniformly small errors. In contrast, the FF approximation exhibits a more pronounced mismatch in the near-field region (short-to-moderate ranges), while its accuracy improves at larger ranges as the propagation approaches the plane-wave regime. Overall, Fig.~1 confirms that wide-band operation significantly improves RCS estimation, and that near-field-consistent modeling (or the proposed NF approximation) is necessary to accurately characterize the fundamental limits in the radiative near-field, especially when the bandwidth and aperture are large.

\begin{figure}[t]
\centering{\includegraphics[width=1\columnwidth]{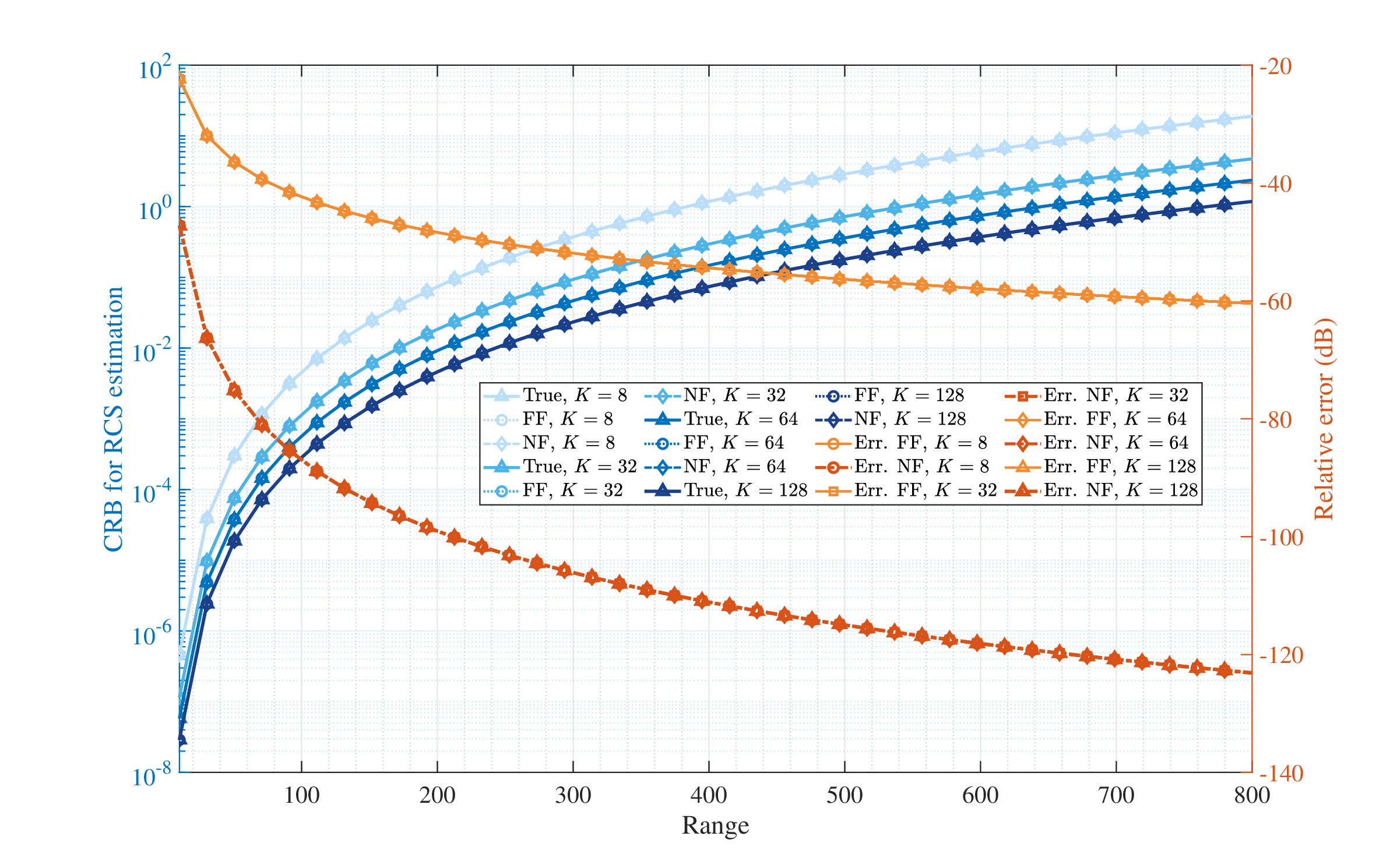}}
\caption{CRB for RCS versus range. 
\label{RCS_range}}
\end{figure}

Fig.~\ref{RCS_antenna} shows the CRB for estimating the target reflectivity (RCS) versus the number of antennas for different bandwidth configurations. The true CRB decreases monotonically with the antenna number for all $K$, confirming that enlarging the aperture improves RCS estimation through higher array gain and increased spatial diversity. Moreover, for any fixed antenna size, increasing $K$ further tightens the bound, highlighting the complementary benefits of spatial and frequency diversity in wide-band near-field sensing.
The right axis reports the relative error of the proposed NF approximation. The NF approximation closely matches the true CRB across the full antenna-number range for all $K$, with consistently small errors. This indicates that the approximation remains reliable even as the aperture grows electrically large and the model must capture pronounced near-field curvature effects. Overall, Fig.~\ref{RCS_antenna} verifies that wide-band operation and large apertures jointly enhance RCS estimation accuracy, and that the proposed NF approximation provides an accurate and efficient surrogate for performance evaluation and design.

\begin{figure}[t]
\centering{\includegraphics[width=1\columnwidth]{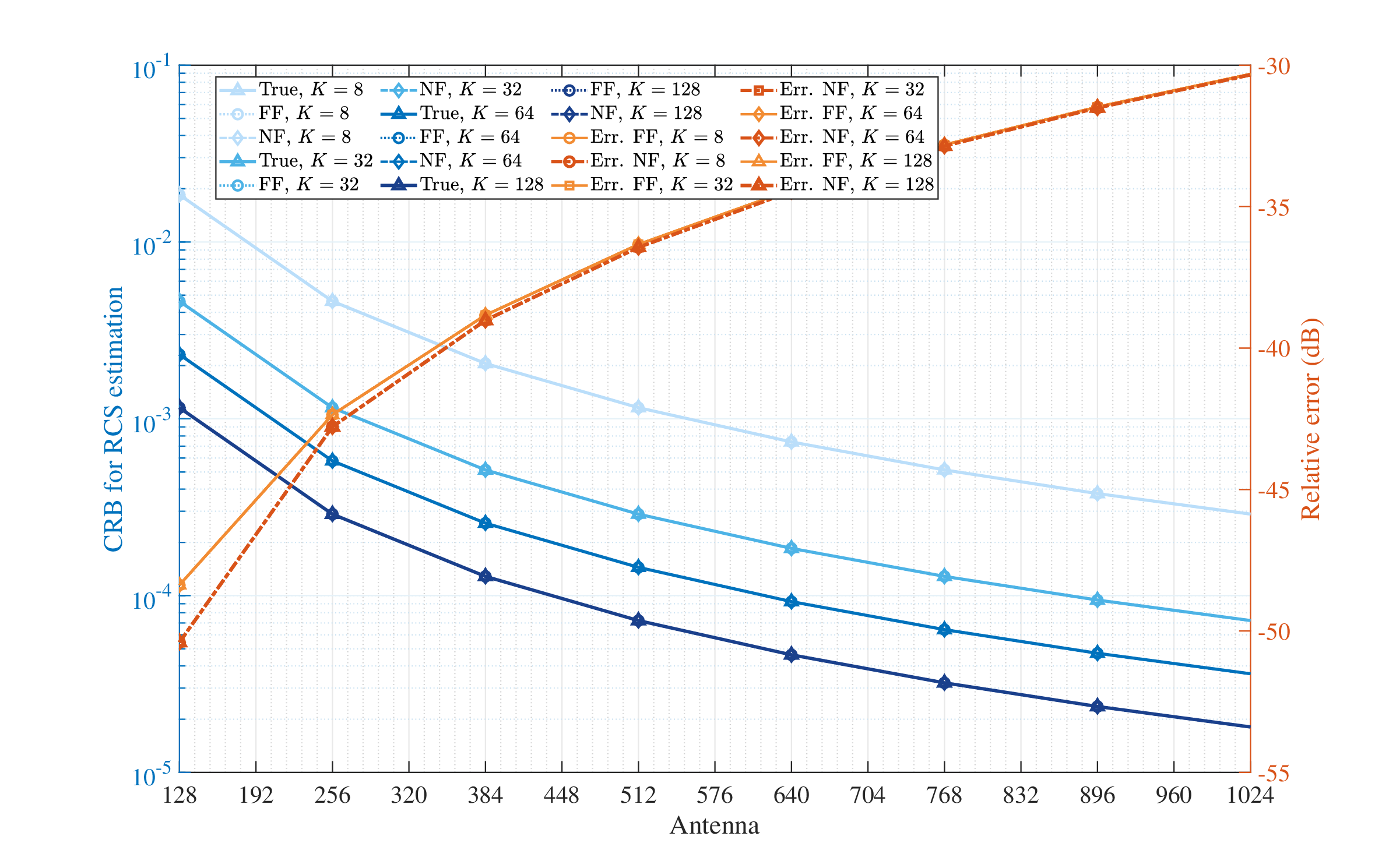}}
\caption{CRB for RCS versus antenna number. 
\label{RCS_antenna}}
\end{figure}

Fig.~\ref{vel_range} depicts the CRBs for estimating the velocity components versus target range in the wide-band near-field regime, for different numbers of subcarriers. For all $K$, the true velocity CRBs increase monotonically with range, indicating that velocity estimation becomes progressively more difficult as the target moves farther away. This trend is driven by the reduced received signal strength and the weakened sensitivity of the slow-time phase evolution to motion at larger distances. Increasing the bandwidth (larger $K$) consistently improves the velocity bounds across the entire range interval. This shows that frequency diversity does not only benefit ranging-related parameters, but also enhances motion estimation by providing additional phase variation across subcarriers and strengthening the overall Fisher information when the wide-band near-field model accounts for frequency-dependent phase and amplitude.

The figure also reports the relative errors of the NF and FF approximations. The proposed NF approximation remains tightly aligned with the true CRB for all $K$, yielding uniformly small errors, whereas the FF approximation exhibits noticeably larger mismatch at short-to-moderate ranges and gradually improves as the target transitions toward the far-field. Overall, Fig.~~\ref{vel_range} confirms the joint benefits of wide-band operation and near-field-consistent modeling for characterizing velocity estimation limits in large-aperture sensing systems.

\begin{figure}[t]
\centering{\includegraphics[width=1\columnwidth]{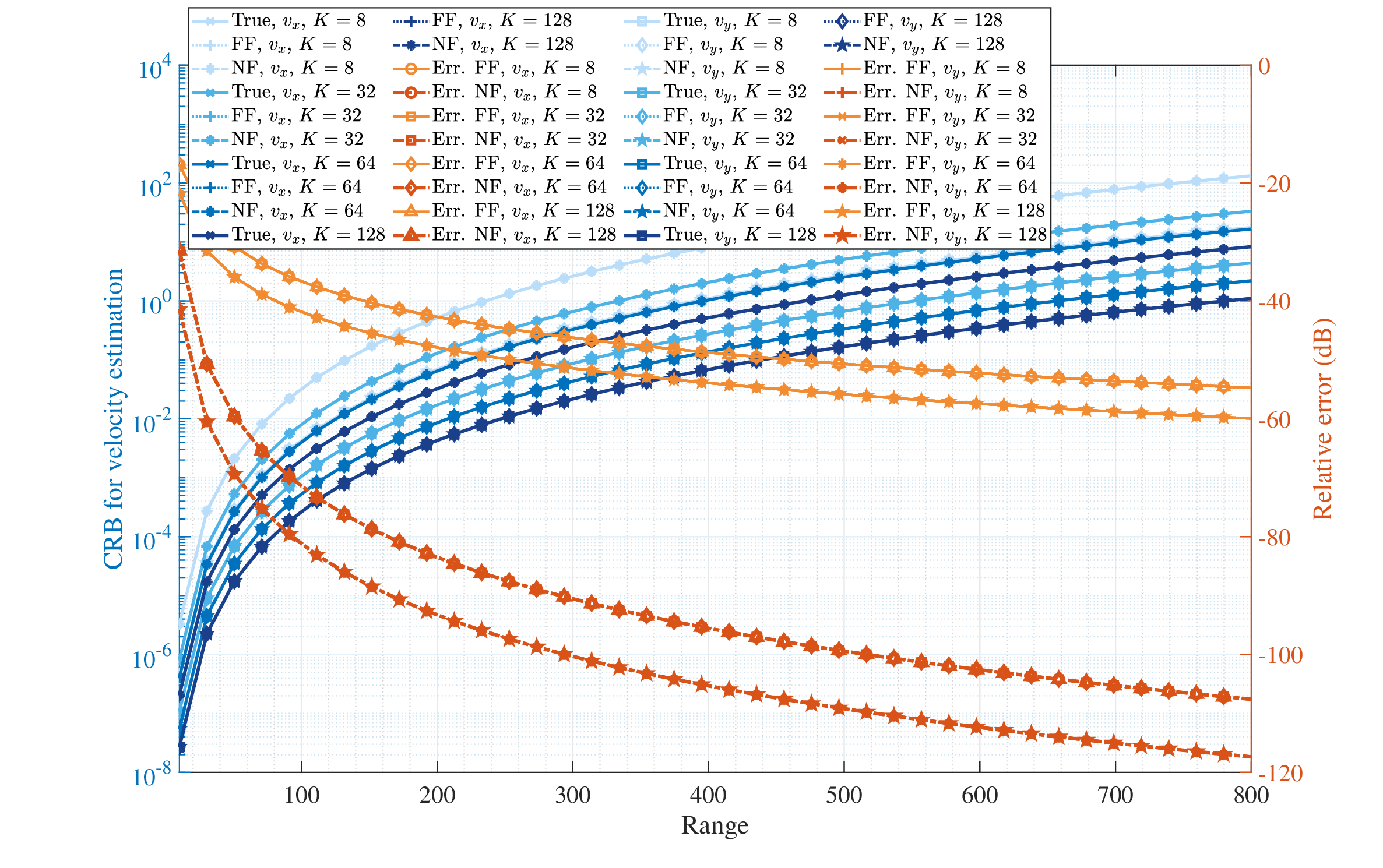}}
\caption{CRB for velocity versus range. 
\label{vel_range}}
\end{figure}

Fig.~\ref{vel_antenna} reports the velocity CRBs versus the number of antennas for different bandwidth configurations. The true CRBs for both velocity components decrease monotonically with the antenna number, confirming that enlarging the aperture improves velocity estimation through higher array gain and increased spatial sensitivity to the Doppler-induced phase evolution. For any fixed antenna size, increasing $K$ further tightens the bounds, illustrating the complementary roles of spatial and frequency diversity in wide-band near-field sensing.

The right axis shows the relative errors of the approximations. The proposed NF approximation remains tightly matched to the true CRB across the full antenna-number range for all $K$, with consistently small errors. In contrast, the FF approximation becomes less reliable as the array size grows and the aperture becomes electrically large, since the plane-wave model cannot capture the near-field curvature and the resulting parameter coupling. Overall, Fig.~~\ref{vel_antenna} demonstrates that accurate near-field modeling is essential for characterizing velocity limits in large-aperture wide-band systems, while the NF approximation provides an efficient and accurate surrogate for the practical design studies.

\begin{figure}[t]
\centering{\includegraphics[width=1\columnwidth]{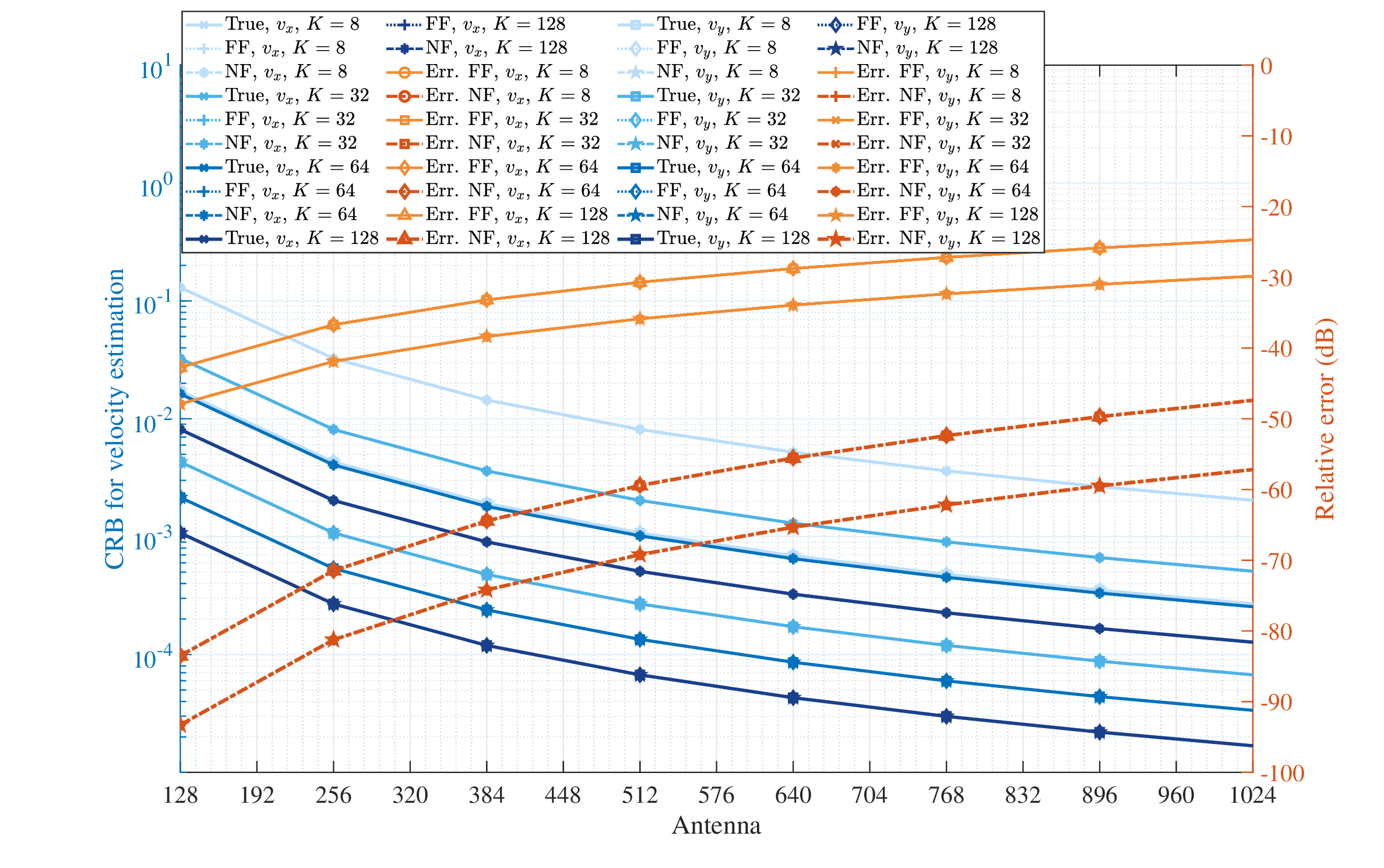}}
\caption{CRB for velocity versus range. 
\label{vel_antenna}}
\end{figure}

\begin{figure}[t]
\centering{\includegraphics[width=1\columnwidth]{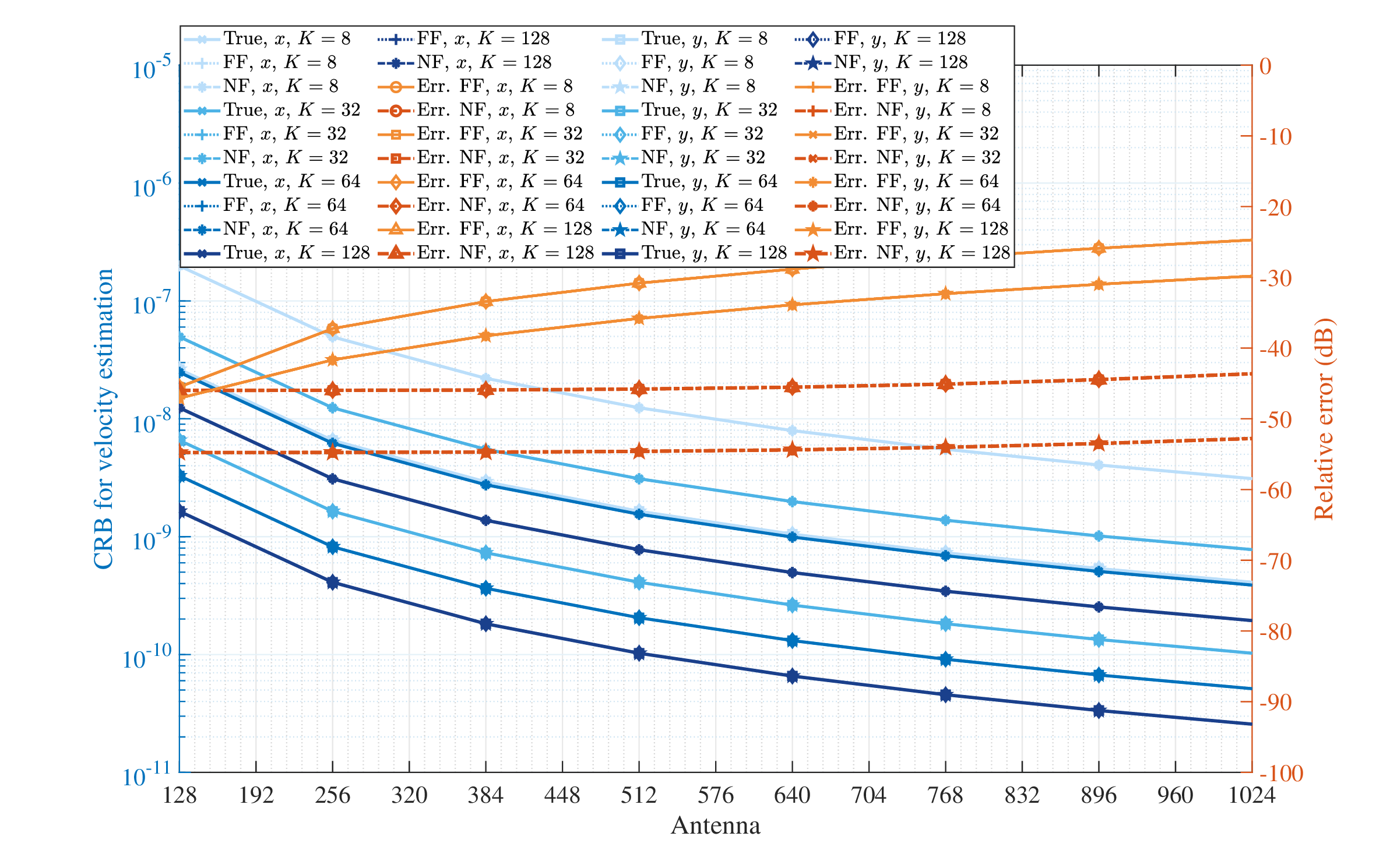}}
\caption{CRB for location versus antenna. 
\label{loc_antenna}}
\end{figure}

Fig.~\ref{loc_antenna} presents the CRBs for estimating the target location components versus the number of antennas under different bandwidth configurations. The true location CRBs decrease monotonically as the antenna number increases, confirming that a larger aperture improves localization accuracy by providing higher array gain and stronger spatial phase gradients. In addition, for any fixed antenna size, increasing $K$ yields consistently tighter bounds, demonstrating that frequency diversity in wide-band sensing complements spatial diversity and enhances the overall sensitivity to position through frequency-dependent phase variations.

The right axis reports the relative errors of the approximations. The proposed NF approximation remains closely matched to the true CRB across the entire antenna-number range for all $K$, indicating that it reliably captures the near-field curvature effects even for electrically large arrays. In contrast, the FF approximation exhibits noticeably larger deviations and tends to become less accurate as the array grows, since the plane-wave model neglects the element-dependent distance variations that dominate near-field localization. Overall, Fig.~\ref{loc_antenna} highlights the joint benefits of large apertures and wide bandwidth for near-field localization, and validates the accuracy of the proposed NF approximation for efficient performance evaluation.

\vspace{-0.2cm}
\subsection{CRB with Multiple Targets and Bistatic}
\vspace{-0.1cm}
Now, we consider bistatic setup with three targets, where the Tx and Rx centroids are separated (Tx at $x_{t_0}=-2$, Rx at $x_{r_t}=2$). Three targets are located at $(20^\circ,100~\mathrm{m})$ with ${\bm v}=[1,4]$~m/s, $(-45^\circ,150~\mathrm{m})$ with ${\bm v}=[4,3]$~m/s, and $(-5^\circ,50~\mathrm{m})$ with ${\bm v}=[10,6]$~m/s. We calculate the total CRB and approximation error by summing the CRB and error for all targets.

Fig.~\ref{RCS_antenna2} compares the CRB for RCS estimation versus the number of antennas under the true model and the FF approximation (with corresponding relative error on the right axis). The true CRB decreases monotonically with the antenna number, confirming that enlarging the aperture improves reflectivity estimation through higher array gain and increased measurement diversity. Both FF approximation and NF approximation reach a good approximation in bistatic and multi-target case. 


\begin{figure}[t]
\centering{\includegraphics[width=1\columnwidth]{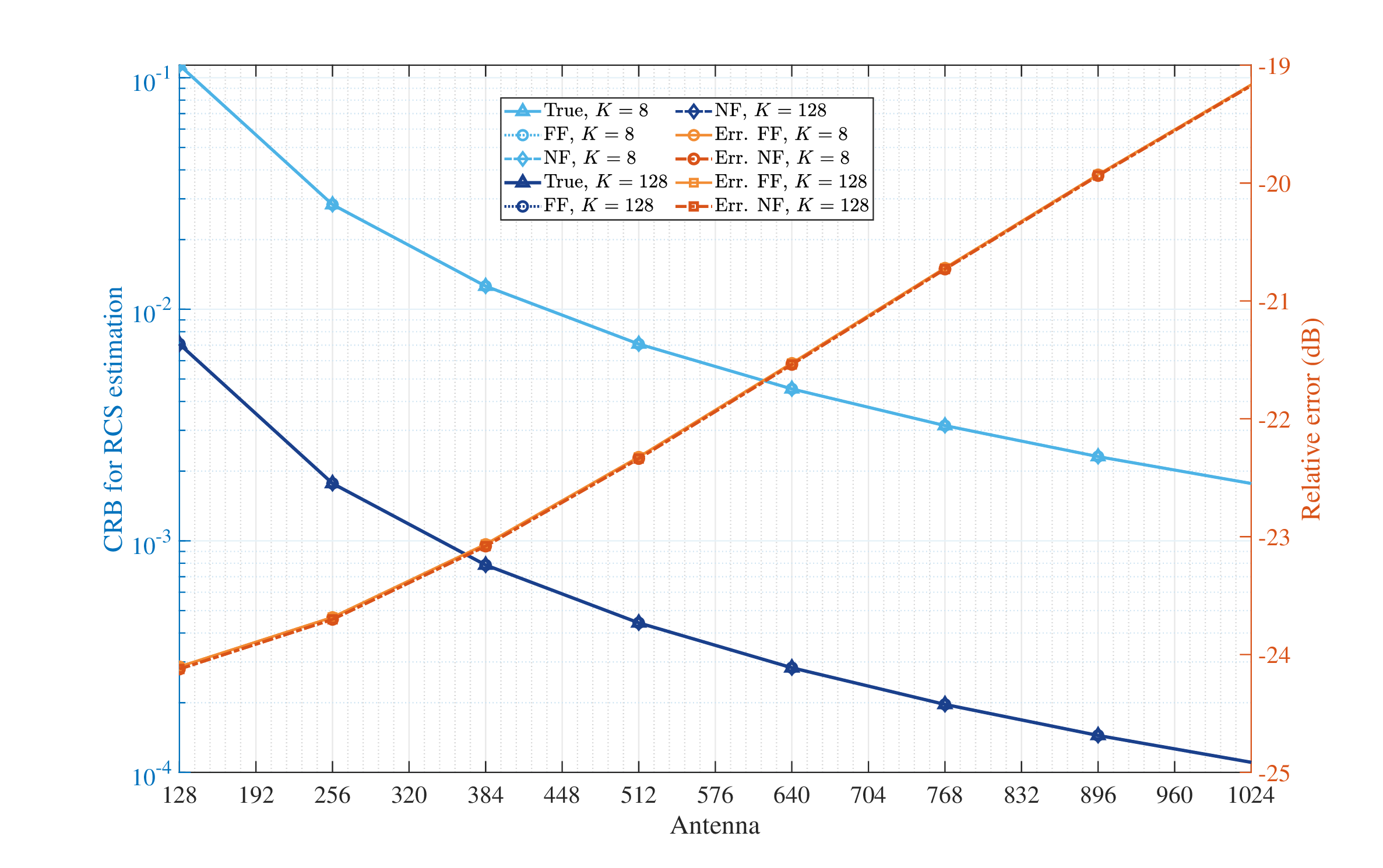}}
\caption{CRB for RCS versus antenna. 
\label{RCS_antenna2}}
\end{figure}

Fig.~\ref{vel_antenna2} shows the velocity CRBs versus the number of antennas, comparing the true model with the FF approximation and reporting the corresponding relative error on the right axis. The true CRBs decrease monotonically with antenna number, indicating that larger apertures provide higher array gain and stronger spatial sensitivity to the Doppler-induced phase evolution, thereby improving velocity estimation for bistatic and multi-target case.

The FF approximation exhibits a clear mismatch whose relative error increases as the array grows. This behavior reflects the breakdown of the plane-wave assumption for electrically large apertures: the element-dependent near-field geometry induces non-negligible curvature and local line-of-sight variations that affect the Doppler phase across the array and couple motion with position. As a result, FF-based bounds can substantially misestimate the achievable velocity accuracy for large arrays, motivating near-field-consistent CRB analysis in the wide-band regime.

\begin{figure}[t]
\centering{\includegraphics[width=1\columnwidth]{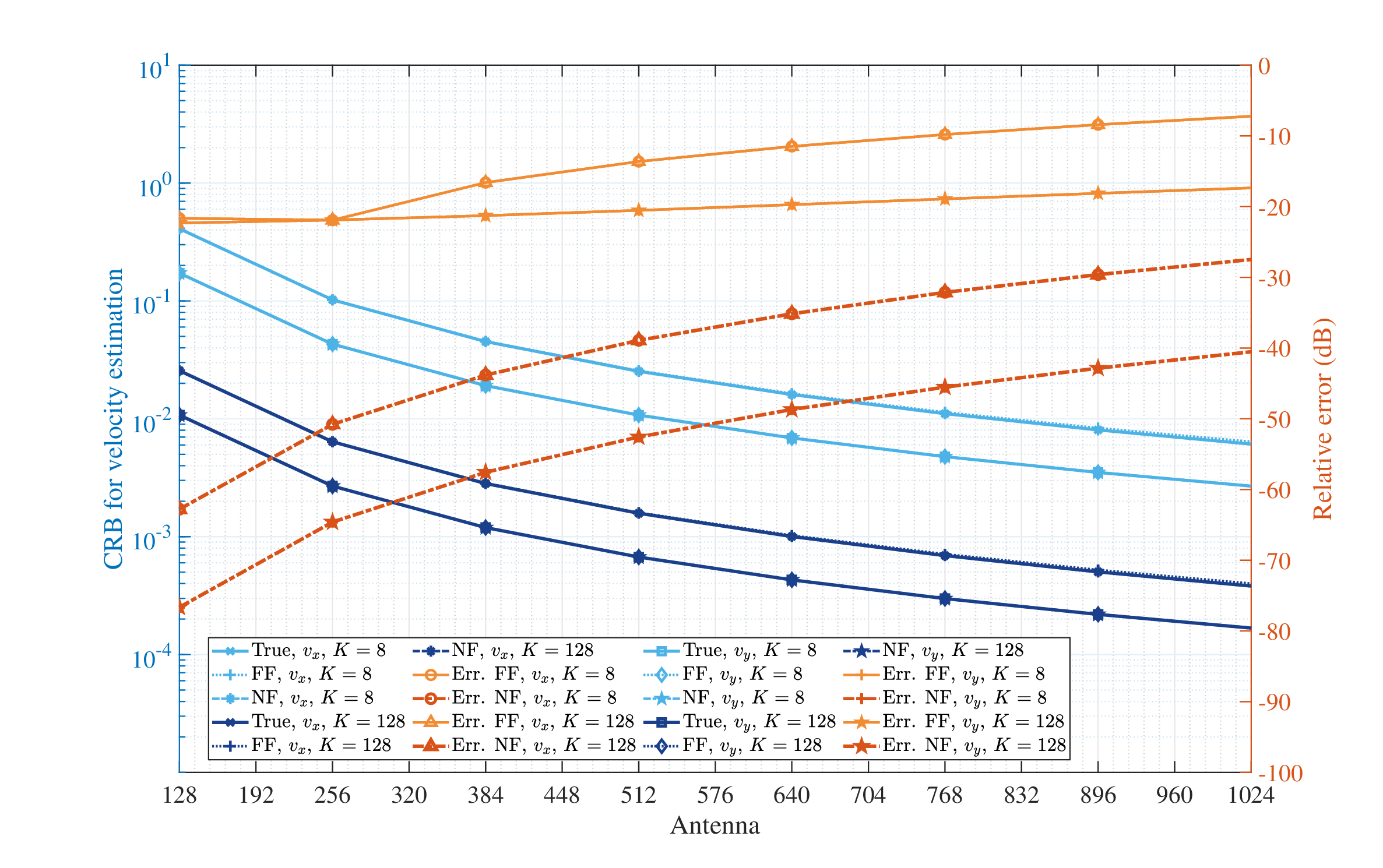}}
\caption{CRB for velocity versus antenna. 
\label{vel_antenna2}}
\end{figure}

\begin{figure}[t]
\centering{\includegraphics[width=1\columnwidth]{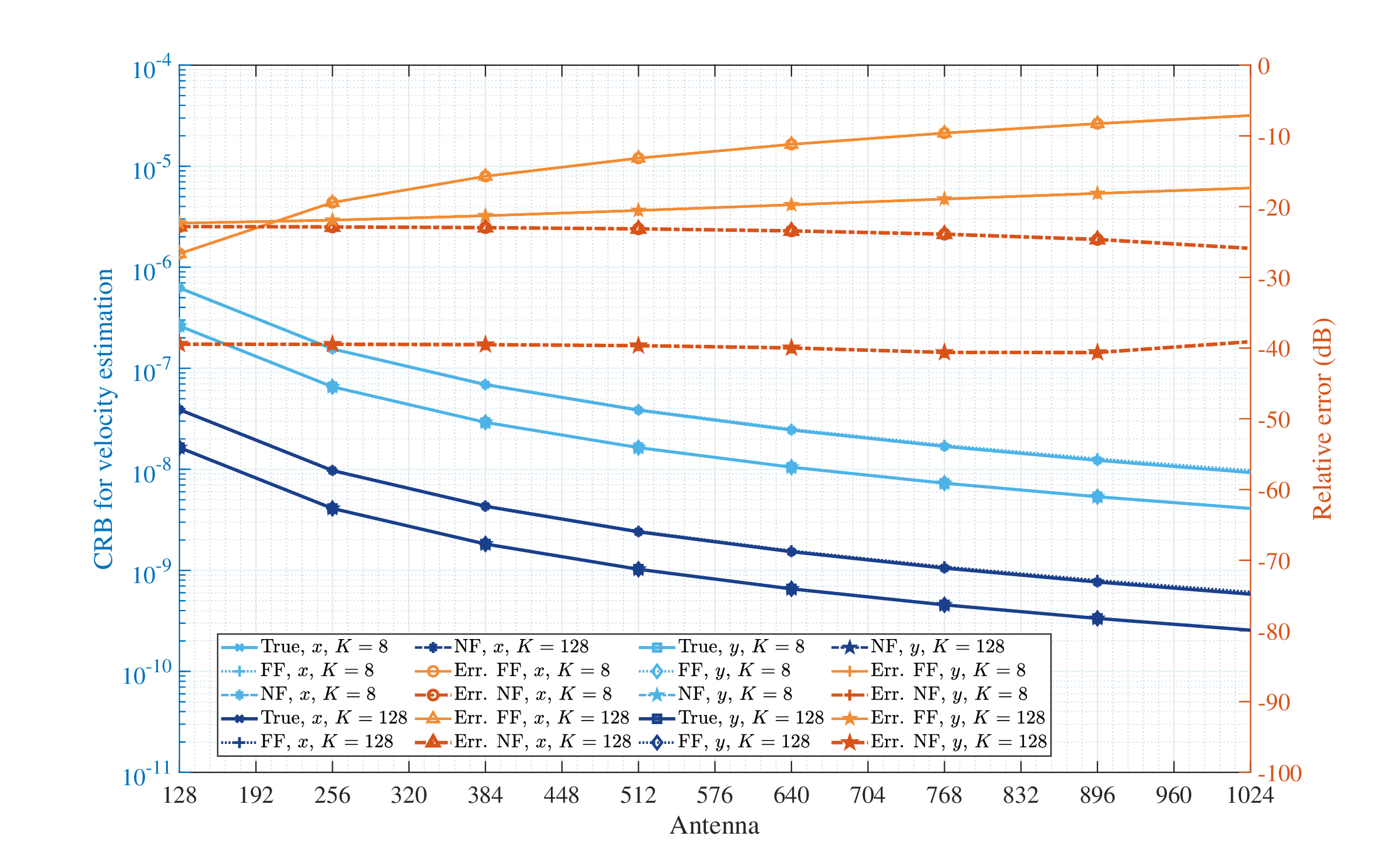}}
\caption{CRB for location versus antenna. 
\label{loc_antenna2}}
\end{figure}

Fig.~\ref{loc_antenna2} reports the location CRBs versus the number of antennas, comparing the true model with the FF approximation and showing the corresponding relative error on the right axis. The true CRBs decrease monotonically as the antenna number increases, confirming that enlarging the aperture improves localization accuracy by strengthening the spatial phase gradients and increasing the effective measurement diversity.

In contrast, the FF approximation shows a noticeable mismatch whose relative error increases with antenna number. As the aperture becomes electrically large, near-field curvature and element-dependent distances dominate the position sensitivity, while the plane-wave FF model suppresses these effects and yields overly simplified Fisher information. Consequently, FF-based localization bounds can become increasingly unreliable for large arrays, especially for $x$-axis, highlighting the need for near-field-consistent CRB characterizations (or accurate NF approximations) when evaluating wide-band near-field localization performance.

\section{Conclusions}
This paper (Part II) studied fundamental limits for wide-band near-field sensing under an OFDM-based model with potentially separated transmit and receive arrays. By preserving spherical-wave propagation and an element-wise Doppler model, we derived the wide-band FIM and obtained conditional CRBs for joint estimation of location, 2D velocity, and complex reflectivity of multiple moving targets. The bounds explicitly show how sensing information accumulates over CPI and across subcarriers, and how near-field geometry induces intrinsic position–velocity coupling. We further developed far-field and near-field asymptotic CRB approximations that expose the dominant scaling laws with respect to array aperture, observation time, carrier wavelength, and bandwidth, providing design-oriented insights and enabling accurate regime comparisons. Numerical results validated the analysis and quantified the approximation accuracy.

Part I established the narrow-band baseline and its far- and near-field scaling laws using a snapshot (slow-time) model. Part II extends this framework to wide-band operation by incorporating frequency-dependent propagation and showing that the overall information aggregates across subcarriers under OFDM orthogonality, while also highlighting when residual delay spread may introduce ISI and affect the validity of the per-subcarrier model. 

Several extensions are worth pursuing. First, it is of interest to incorporate colored or noncircular noise and residual self-interference models that yield nontrivial covariance terms in the FIM. Second, extending the analysis to multipath-rich environments and higher-order reflections would enable bounds under more realistic propagation. Third, wide-band near-field operation with strong dispersion motivates CRBs under ISI-aware observation models beyond the ideal orthogonal-subcarrier assumption. Finally, integrating these bounds into waveform/beamforming and synthetic-aperture trajectory design provides a principled route to near-field sensing strategies that approach the derived fundamental limits.

\ifCLASSOPTIONcaptionsoff
  \newpage
\fi

\bibliographystyle{IEEEtran}
\bibliography{ref}

\ifCLASSOPTIONcaptionsoff
  \newpage
\fi

\end{document}